\newif\ifarxiv
\renewcommand\footnotetextcopyrightpermission[1]{}
\theoremstyle{definition}
\newtheorem{definition}{Definition}[section]
\newcommand{\concat}{%
  \mathbin{{+}\mspace{-6mu}{+}}%
}
\newcommand{\terminator}{
  \otimes
}
\newcommand{\emptydelta}{
  \varnothing
}
\newenvironment{DIFnomarkup}{}{}
\newcommand{\diffblock}[1]{#1}
  \providecommand\BibTeX{{%
    \normalfont B\kern-0.5em{\scshape i\kern-0.25em b}\kern-0.8em\TeX}}}
\newif\ifcomments
    \providecommand{\shadaj}[1]{{\protect\color{brown}{\bf [shadaj: #1]}}}
    \providecommand{\conor}[1]{{\protect\color{red}{\bf [conor: #1]}}}
    \providecommand{\alvin}[1]{{\protect\color{purple}{\bf [alvin: #1]}}}
    \providecommand{\mae}[1]{{\protect\color{blue}{\bf [mae: #1]}}}
    \providecommand{\joe}[1]{{\protect\color{teal}{\bf [joe: #1]}}}
    \providecommand{\jmh}[1]{{\protect\color{teal}{\bf [joe: #1]}}}
    \providecommand{\david}[1]{{\protect\color{green}{\bf [david: #1]}}}
    \providecommand{\chris}[1]{{\protect\color{violet}{\bf [chris: #1]}}}
    \providecommand{\davidmwei}[1]{{\protect\color{pink}{\bf [david wei: #1]}}}
    \providecommand{\kaushik}[1]{{\protect\color{orange}{\bf [kaushik: #1]}}}
    \providecommand{\justin}[1]{{\protect\color{green}{\bf [justin: #1]}}}
    \providecommand{\mingwei}[1]{{\protect\color{rhodamine}{\bf [mingwei: #1]}}}
    \providecommand{\rithvik}[1]{{\protect\color{red}{\bf [rithvik: #1]}}}
    \providecommand{\nc}[1]{{\protect\color{pink}{\bf [nc: #1]}}}
     \providecommand{\accheng}[1]{{\protect\color{olive}{\bf [accheng: #1]}}}
    \providecommand{\dan}[1]{{\protect\color{purple}{\bf [dan: #1]}}}
    \providecommand{\shadaj}[1]{}
    \providecommand{\conor}[1]{}
    \providecommand{\alvin}[1]{}
    \providecommand{\mae}[1]{}
    \providecommand{\joe}[1]{}
    \providecommand{\jmh}[1]{}
    \providecommand{\david}[1]{}
    \providecommand{\chris}[1]{}
    \providecommand{\davidmwei}[1]{}
    \providecommand{\kaushik}[1]{}
    \providecommand{\justin}[1]{}
    \providecommand{\mingwei}[1]{}
    \providecommand{\rithvik}[1]{}
    \providecommand{\nc}[1]{}
    \providecommand{\accheng}[1]{}
    \providecommand{\dan}[1]{}
\begin{document}

\title{Flo: a Semantic Foundation for Progressive Stream Processing}

\author{Shadaj Laddad}
\affiliation{%
 \institution{UC Berkeley} \country{USA}}
 \email{shadaj@cs.berkeley.edu}

\author{Alvin Cheung}
\affiliation{%
 \institution{UC Berkeley} \country{USA}}
 \email{akcheung@cs.berkeley.edu}

\author{Joseph M. Hellerstein}
\affiliation{%
 \institution{UC Berkeley} \country{USA}}
 \email{hellerstein@cs.berkeley.edu}

\author{Mae Milano}
\affiliation{%
 \institution{Princeton University} \country{USA}}
 \email{mpmilano@cs.princeton.edu}
    
\begin{abstract}
Streaming systems are present throughout modern applications, processing continuous data in real-time. Existing streaming languages have a variety of semantic models and guarantees that are often incompatible. Yet all these languages are considered ``streaming''---what do they have in common? In this paper, we identify two general yet precise semantic properties: streaming progress and eager execution. Together, they ensure that streaming outputs are deterministic and kept fresh with respect to streaming inputs. We formally define these properties in the context of Flo, a parameterized streaming language that abstracts over dataflow operators and the underlying structure of streams. It leverages a lightweight type system to distinguish bounded streams, which allow operators to block on termination, from unbounded ones. Furthermore, Flo provides constructs for dataflow composition and nested graphs with cycles. To demonstrate the generality of our properties, we show how key ideas from representative streaming and incremental computation systems---Flink, LVars, and DBSP---have semantics that can be modeled in Flo and guarantees that map to our properties.
\end{abstract}

\keywords{stream processing, dataflow langugages, incremental computation}

\maketitle

\section{Introduction}
Stream processing is an increasingly important component of modern applications, from real-time analytics to collaborative tools. These applications must respond with low latency to events as they arise and often process long streams of data. Furthermore, these applications often involve stateful processing, where the output of a computation depends on the history of the inputs.

Many streaming applications are expressed as dataflow programs~\cite{dataflow_model}, specified as a directed graph of operators. Each node is an operator that consumes and produces data elements, and the edges represent the flow of data between them. This model is used in systems like Apache Flink~\cite{flink}, Spark~\cite{discretized_streams}, StreamIt~\cite{streamit}, and many functional-reactive programming languages~\cite{paykin2016essence}. Dataflow programs benefit from being written in a declarative manner that abstracts away from low-level details such as how operators are scheduled and where state in the system is accumulated~\cite{noria,cql,adaptive_stream_processing,borealis}. This makes it easy for compilers to optimize dataflow programs, since they can rearrange and transform operators within the graph without affecting the observable behavior of the program.

Existing streaming languages present a variety of semantics and aim to provide various guarantees. But several streaming languages do not even agree on what constitutes a stream! They can be ordered sequences~\cite{flink, streamit}, or sets~\cite{dedalus}, or even lattices~\cite{lvars} or Z-Sets~\cite{dbsp}. These languages also vary in their semantics for state persistence, and offer a range of approaches for concepts like windowed aggregations and batched execution. But they also have much in common: streaming languages tolerate changing inputs and aim to produce outputs as early as possible. Yet these ideas have remained fuzzy and tied to incompatible semantics.

In this paper, we distill these common traits into two key properties: \textbf{streaming progress} and \textbf{eager execution}. We formally define these properties in the context of \textbf{Flo}, a parameterized streaming language that accommodates a range of streaming semantics while providing sufficient structure to precisely define our proof objectives. Flo abstracts away from notions of underlying collection types, such as ordered sequences, and supports semantics that many streaming languages cannot reason about~\cite{streamtypes}, such as retractions.

A key challenge in streaming systems is ensuring that the program makes \emph{progress}. Unlike traditional languages, the definition of progress in streaming languages has long remained fuzzy and tied to very specific semantics. In Flo, we introduce a \textbf{general yet precise} formal definition called \textbf{streaming progress}, which uses \emph{stream termination} (inspired by work from the databases community~\cite{tucker2003exploiting}) as a common semantic feature to make guarantees about streaming outputs. Streaming progress guarantees that a Flo program produces \textbf{as much output} as possible given its input, and that the program \textbf{will not block} on a stream that may never terminate.

To enforce streaming progress, we introduce a \emph{lightweight} type system that differentiates between bounded and unbounded streams. Bounded streams are guaranteed to eventually terminate, while unbounded streams may never terminate. Operators can only block on bounded streams, and must always make progress with respect to unbounded streams. These lightweight types can be layered on arbitrary underlying collection types, such as Stream Types~\cite{streamtypes}, sets, or even lattices.%

Where streaming progress focuses on ensuring that outputs are produced in a timely fashion relative to inputs, \textbf{eager execution} ensures that the outputs are \emph{deterministic}. Many streaming systems make strong assumptions about how operators are executed. For example, Dedalus~\cite{dedalus} processes batches of data with a single global loop, while Naiad~\cite{naiad} processes messages one-by-one. In Flo, we generalize the requirement of deterministic processing into \textbf{eager execution}. This property enforces that Flo can \textbf{eagerly} execute downstream operators while their inputs are \textbf{still being updated}. Because we define this property in a way that allows for arbitrary execution schedules while arriving at a deterministic result, this gives a low-level scheduler significant power for deciding when operators should be run.

Flo is a declarative dataflow language that takes inspiration from the iterative processing of actors~\cite{actors}, but uses an event loop that maintains \emph{several} independent input and output queues. Rather than process messages one by one, programs in Flo describe a dataflow that operates over concrete collections of data. In fact, these collections are \emph{finite}, unlike models of streams such as co-inductive lists. To implement streaming applications, these concrete inputs can be extended, and the execution of the Flo program can be safely \emph{resumed} over these new inputs.

Flo also supports \textbf{streams of streams}, which capture behavior such as batching. Inspired by ingress/egress nodes in Naiad~\cite{naiad}, nested streams can be processed by \textbf{nested dataflow graphs}, which iteratively process chunks of data sourced from a larger stream with support for carrying state across iterations. This makes it possible to precisely implement a wide range of applications, such as windowed aggregations, processing data with minibatches, or iterative algorithms.

Flo is a \textbf{parameterized} family of languages which bring their own underlying data types and operators. Our proofs of streaming progress and eager execution are compositional, reducing the proof burden to individual operators. This allows Flo to capture the essence of a wide range of streaming systems under a single model, even allowing for composition that spans these approaches. To demonstrate this generality, we show how Flo can be used to model key ideas from a representative variety of streaming languages and incremental computation systems---Flink~\cite{flink}, LVars~\cite{lvars}, and DBSP~\cite{dbsp}---and show how existing semantic goals from each map to streaming progress and eager execution.

In summary, we make the following contributions:
\begin{itemize}
\item We formally define \textbf{streaming progress} and \textbf{eager execution} in the context of Flo, and specify a type system that reasons about stream termination (Section~\ref{sec:core}).
\item We introduce constructs in Flo for \textbf{composing operators} into dataflow graphs and prove that they preserve our key properties (Section~\ref{sec:composition}).
\item We describe the semantics of \textbf{nested streams and graphs} in Flo and demonstrate how they integrate with streaming progress and eager execution (Section~\ref{sec:nested}).
\item We show how the essence and key capabilities of \textbf{existing streaming languages} map to Flo and its foundational properties (Section~\ref{sec:case_studies}).
\end{itemize}

\section{Motivating Example}
\label{sec:motivation}

To understand why we need a model for streaming systems with strong semantic guarantees, let us walk through the challenges a developer may face while writing a simple program that sums up a stream of numbers.

We will accept a sequence of numbers from a streaming source, sum them up, and emit the resulting sum as the single fixed value in the output stream of our program. Streaming sources and sinks are modeled as inputs and outputs to a dataflow graph, so we will not have explicit operators for those. Instead, we can focus on just the core computation of summing up the numbers. A naive attempt may use a \texttt{fold} operator, which accumulates a value over a stream of data. In Rust:

\begin{verbatim}
    output = input.fold(0, |acc, x| acc + x)
\end{verbatim}

This program is simple, but it has a critical flaw: the \texttt{fold} operator is defined over a \emph{fixed} input collection. Operationally this means it will continue processing without producing any output until the stream somehow explicitly terminates. This concern is not addressed in the specification. In a streaming system, this is a common mistake that can lead to programs that hang indefinitely while consuming resources.

We next envision a number of ways a programmer could recognize and address this issue by choosing alternative semantics for this program. We categorize them into strategies that motivate the key properties we aim to establish with Flo: \textbf{streaming progress} and \textbf{eager execution}.

\subsection{Checking Boundedness Constraints}
Our program above does work on a subset of input streams: those that are finitely \textbf{bounded}, i.e. where the ``last'' element of the input stream is guaranteed to arrive. Unfortunately, this program is not well-defined on unbounded streams since we may accumulate the aggregation forever. In our semantics, we will model this failure case as an operator that does not satisfy \textbf{streaming progress}.

To resolve this, we can imagine classifying input streams via a subtype that would capture the boundedness property. We could then declare that the semantics of the \texttt{fold} operator are defined (correct) on bounded input streams, but undefined (incorrect) on unbounded input streams. Boundedness annotations on streams and operators would allow us to statically analyze the program above as incorrect, and suggest a fix: find a way to ensure that \texttt{input} is bounded. %

But what if the programmer's intent was to handle an unbounded \texttt{input} stream? Two natural variations to this specification are possible, as we discuss next.

\subsection{Coercing to Bounded Streams}
Many streaming applications and languages address the mismatch between unbounded streams and operators that require boundedness by introducing constructs 
for computing over finite batches or ``windows'' of the input stream~\cite{flink}. Perhaps this is what our programmer intended: their use of \texttt{fold} was intended to be scoped to a finite substream of \texttt{input}.

To capture this idea, we can envision a program variant that uses a \texttt{batch} operator to emit a \textbf{stream of streams}, where each inner stream is a batch of the original input. There are many possible ``windowing'' semantics for such a \texttt{batch} operator, but let us assume that any such \texttt{batch} operator ensures that each inner stream is \textbf{bounded} by specification. In that case, it is correct to employ \texttt{fold} over the inner streams, even though the outer stream may be \textbf{unbounded}. We can specify how each inner stream is handled via a \texttt{nest} operator that allows us to define a nested dataflow graph to run for each of these inner streams:

\begin{verbatim}
    output = input.batch().nest(|inner| {
        inner.fold(0, |acc, x| acc + x)
    })
\end{verbatim}

The output of this program will be another stream of streams, where each inner stream is the (single) sum of a batch of the input stream. This avoids the semantic problem of our previous example: even if \texttt{input} is unbounded, each \texttt{inner} argument to \texttt{nest} is bounded, and hence can be passed into \texttt{fold}. Moreover, if  \texttt{input} \emph{is} bounded, this program can (with appropriate parameterization) produce the same result as our original program above, by treating the whole input as a single batch. Hence in some sense we have not drifted too far from what seems to have been the programmer's original intent.

\subsection{Embracing Streaming Operators}
An alternative ``fix'' to the initial program would be to replace the \texttt{fold} operator with a streaming variant like \texttt{scan} that emits the ``running'' sum:
\begin{verbatim}
    output = input.scan(0, |acc, x| acc + x)
\end{verbatim}

On the positive side, this program works on both unbounded and bounded input streams (and it will satisfy our formal definition for streaming progress).
However, it seems rather distant from our original program: in particular, there is no way to make it produce the same result as our original program if \texttt{input} is bounded.

Instead, we could imagine a streaming operator whose output is a singleton stream of one monotonically growing value. At each step, this aggregator computes an updated sum, but ignores the result if it is smaller than the previous aggregated result. We could then write a program consuming an unbounded input stream:
\begin{verbatim}
    output = input.sum_lattice()
\end{verbatim}
Once again, for a bounded input, this program will produce the same result as our original program. It is, however, a departure from traditional streaming systems: for an unbounded input, the output of the \texttt{sum\_lattice} operator ``grows'' in the domain of natural numbers rather than in a domain of collections.

To get back to the domain of collections, such a ``monotonic singleton'' stream can be passed into a monotone function that emits an event upon reaching a threshold:
\begin{verbatim}
    output = input.sum_lattice().event_when_above(100)
\end{verbatim}

This is a common pattern in monitoring systems, and is a simplified version of the approach taken by LVars~\cite{lvars}. Why does the threshold need to be monotone? This boils down to our second formal property: \textbf{eager execution}. This requires that the overall program yields deterministic results even if we \emph{eagerly} execute operators on partial inputs. If this threshold were not monotone, there could be non-determinism due to when the threshold is evaluated. But \textbf{eager execution} is a more general property than monotonicity; we will show that it is equally meaningful in contexts where there is no natural ordering of values, such as in incremental computations with retractions.

\subsection{Discussion}
We started with a program that is ill-specified over unbounded streams. We saw various ways to ``fix'' this problem, inspired by salient design points of different streaming languages. What is key is that although these techniques were motivated by ideas from different languages, they all serve to satisfy two general properties of programs written in Flo: \textbf{streaming progress} and \textbf{eager execution}. In the following sections, we will walk through the formal semantics of Flo and show how we can precisely define these properties while retaining the flexibility to implement a wide range of streaming semantics found in the literature and used in practice.

\section{Collections, Streams, Operators, and Core Properties}
\label{sec:core}

The Flo model is based on specifications of dataflow pipelines, where \textbf{collections} of data elements are transformed by \textbf{operators} such as \texttt{map}, \texttt{filter}, or \texttt{join}. This is inspired by existing systems such as Flink~\cite{flink}, but with a critical difference that Flo is \emph{parameterized} over collection types and operators. This enables us to reason about a wide range of streaming paradigms and capture the essence of languages like LVars~\cite{lvars}, Bloom~\cite{bloom}, and Temporel~\cite{temporel} under a single model.

In this section, we define a family of collection languages $L^C$, operator languages $L^O$, and specify the formal properties that these languages must satisfy. In \cref{sec:composition}, we will define a new family of languages $L^G$ which include mechanisms to compose operators into a dataflow graph. Finally, in \cref{sec:nested}, we will extend $L^G$ with built-in operators for executing nested graphs. Our goal is to prove eager execution and streaming progress for all these languages.

\subsection{The Flo Event Loop}
Before we can dive into the semantics of these languages, we need to first discuss how Flo programs are executed. Flo deviates from classic streaming models in that it uses an actor-inspired event loop where messages are received, processed, and outputs are emitted. This means that the Flo program itself is always executing over concrete, finite collections of data rather than abstract streams. We describe a lightweight pseudocode for the event loop of a Flo program in \cref{fig:event_loop}.

\begin{figure}[h]
\begin{minipage}{0.94\textwidth}
    \begin{algorithm}[H]
    $O \gets \text{tuple of empty collections for each output}$

    $G \gets \text{initial Flo program}$

    \Loop{
        $\Delta \gets \text{tuple of new data batches for each input}$

        $I \gets \text{inputs of } G$

        $G \gets G \text{ with inputs set to } I \concat \Delta$

        $G, O \gets G \text{ after running an arbitrary number of small-steps with initial output } O$

        $O \gets \text{ remaining data after sending arbitrary part of } O$
    }
    \end{algorithm}
\end{minipage}

\caption{The event loop used to execute Flo programs.}
\label{fig:event_loop}
\end{figure}

Whenever a batch of new data is received, we use a \textbf{concatenation} operator $\concat$ to add this to the existing inputs. In classical streaming systems, such as those proposed in Flink~\cite{flink} and Stream Types~\cite{streamtypes}, this corresponds to appending new elements to the end of the existing data. But in Flo, our formalization makes it possible for this concatenation operator to take many forms, including those that do not monotonically grow the collection.

The other key aspect to note is that we run an \emph{arbitrary} number of small-steps of the program $G$ in each iteration, rather than running it until there is nothing to be done. We also allow the event loop to arbitrarily choose which data is sent at the end of each iteration; the outputs need not be consumed according to concatenation order. Later in this section, we will introduce key properties that ensure that this loop will always make progress and yield deterministic results.

\subsection{Collection Values, Expressions, and Types}
\label{sec:collection_types}

Flo programs manipulate \emph{collections}, which are concrete, finite values used to capture inputs, outputs, and (in \cref{sec:composition}) intermediate states of the program. Collection values can be updated as new data arrives or as an operator consumes data, but the way a collection value changes over time \emph{does not need to follow a partial order}, making it possible for our semantics to capture applications such as incremental computation over relations.

We define a collection language $L^C = (C, \concat, E^C, T^C, \llbracket\rrbracket^C, \lfloor\rfloor^C, \text{type}^C, \mathit{fix})$ as a tuple of:
\begin{itemize}
\item C: the set of collection values, which are mathematical objects
\item $\concat: C \times C \rightarrow C$: a ``concatenation'' function on collections %
\item $E^C$: the set of collection expressions, which are syntactic objects
\item $T^C \subseteq \mathcal{P}(C)$: the set of collection types, which are sets of collection values
\item $\llbracket\rrbracket^C: E^C \rightarrow C$: a total denotational semantics that maps collection expressions to values
\item $\lfloor\rfloor^C: C ⇀ E^C$: a partial lowering function that maps collection values to expressions %
\item $\mathit{type}^C: E^C \rightarrow T^C$: a total typing function that maps collection expressions to types
\item $\mathit{fix}: C \rightarrow C$, a transformation from a value into an equivalent\footnote{The definition of equivalence is up to the collection (for example, concatenating a stream terminator or setting a maximum size), and determines the guarantees provided by streaming progress (\cref{def:operator_streaming_progress})} one that is $\mathit{fixed}$
\end{itemize}

We additionally define: $\mathit{fixed}(c) \triangleq \forall{c' \in C}.~c \concat c' = c$ and $\emptydelta \in C$ is identity on the RHS of $\concat$.

We constrain $L^C$ via the following well-formedness conditions:
$$\forall{e \in E^C}.~\llbracket e \rrbracket^C \in \mathit{type}^C(e) ∧ ⌊〚e〛ꟲ⌋ꟲ = e$$
$$\forall{c \in C}.~ \mathit{fixed}(\mathit{fix}(c)) \land c \concat \emptydelta = c$$
$$\forall{\tau \in T^C, c \in \tau, c', c'' \in C}.~c \concat c' = c'' \implies c'' \in \tau$$

The language of collections involves both mathematical and syntactic representations. Our definition of collections is centered around collection \textbf{values}, which are the underlying mathematical objects being manipulated. At the syntax level, we represent these with collection \textbf{expressions}, which can be lifted to values via a denotational semantics, and then lowered back down to syntax using the $\lfloor\rfloor^C$ function. We also define a typing function $\mathit{type}^C$ that maps collection expressions to types, which are simply sets of collection values.%

A key difference between the Flo model and other
streaming semantics~\cite{streamtypes} is that the
concatenation function \textbf{does not} need to follow a partial order over
collection types, or satisfy algebraic properties like commutativity or associativity. What \emph{does} interest us is the question of when the concatenation function reaches a fixpoint. The $\mathit{fixed}$ predicate identifies a collection value such that no more data can be added to it, which we will leverage to define streaming progress.

Collections can take on a
variety of forms. A common collection in streaming systems is
the \emph{ordered sequence}, which captures an ordered list of
elements. But collections could also be multi-sets---as in streaming extensions to SQL~\cite{sql_stream}---or sets, as in Dedalus~\cite{dedalus}---where order often does not affect semantics. A ``collection'' can even be a
single value where ``concatenating'' to the collection updates the value---as in our \texttt{lattice\_sum} result in Section~\ref{sec:motivation}. We will lay out detailed examples of concrete collection types in \cref{sec:case_studies}.

\subsection{Stream Types and Boundedness}
\label{sec:stream_types}

Collections describe the values that are being processed by operators, but our discussion so far has been more reminiscent of batch processing than streaming. Our unique interest in streaming is the evolution of collections over time. In our motivation, we identified two key aspects of a streaming program's behavior with respect to time: \textbf{eager execution} makes it possible to correctly process newly-arrived data on an input to get an updated output, and \textbf{streaming progress} ensures that the program will not unexpectedly block on a collection becoming fixed.

To formally define streaming progress later in this section, we need to add a layer on top of collection types, which we call \textbf{stream types}. In our model, the key property we care about is whether a collection value will eventually become \textbf{fixed} (using the definition from \cref{sec:collection_types}), or if it may never become that. To capture this, we use a \textbf{boundedness flag} inspired by work in databases~\cite{tucker2003exploiting}, which is either \textbf{B}ounded or \textbf{U}nbounded. We define a stream type as a pair of a collection type and a flag on the left of \cref{fig:stream_types}. We will see stream types in action in \cref{sec:streaming_progress}.

\begin{figure}[h]
\centering
\vspace{0.4em}
\begin{minipage}{0.4\textwidth}
\begin{grammar}
<stream-type> ::= (<T>, $\mathbf{B}$ | $\mathbf{U}$)
\end{grammar}
\end{minipage}
\begin{minipage}{0.45\textwidth}
\begin{mathpar}\small
{
    \inferrule[reflexive-subtype]
    {}
    {S \leq S}
}

{
    \inferrule[bound-subtype]
    {}
    {(C, \mathbf{B}) \leq (C, \mathbf{U})}
}
\end{mathpar}
\end{minipage}

\caption{The grammar for stream types, where $T \in T^C$, and the subtyping relationship for stream types.}
\label{fig:stream_types}
\end{figure}

Note that collection expressions are not typed directly to a stream type, instead stream types are used as markers on inputs and outputs of a Flo program. We also have a simple subtyping relationship, where a stream type that is declared as bounded can be used in an unbounded context, because an unbounded stream has no restrictions on how the collection value behaves over time. We list the typing rule for this relationship on the right of \cref{fig:stream_types}, where $\leq$ is a subtyping relationship we will use in the rules for composing operators.

\subsection{Operators}
Flo programs transform input collections into output collections. This transformation is carried out by \textbf{operators} that consume data from several input collections to update output collections. In this section, we lay out the family of operator languages $L^O$, which captures Flo programs with a single operator. Because programs written in this language fit the general structure of the Flo event loop, we will use this language to lay out all the key properties we aim to prove about Flo. In \cref{sec:composition}, we will extend this language to $L^G$ to capture the composition of operators into a dataflow graph.

We will use the notation $[C]$ to represent tuples whose elements are each in C (and similarly for $[E^C]$), which denotes having multiple inputs or outputs. We will also denote $T^S$ to be the set of all stream types and $[T^S]$ to be a tuple of many stream types. Tuples of stream types follow an element-wise subtyping relationship.

We define an operator language $L^O = (L^C, E^O, \rightarrow^\delta, ORD^O, \vdash^O)$ as a tuple of:
\begin{itemize}
\item $L^C = (C, \concat, E^C, T^C, \llbracket\rrbracket^C, \lfloor\rfloor^C, \text{type}^C, \mathit{fix})$: a well-formed collection language
\item $E^O$: a language of operator expressions, which are syntactic objects
\item $(I, e^o) \rightarrow^\delta (I, e^o, O)$, a small-step operational semantics where $I, O \in [C]$ and $e^o \in E^O$
\item $(I, e^o) \prec^O (I, e^o) \in ORD^O$, a set of partial orders on collections where $I \in [C]$ and $e^o \in E^O$ (for some operators, we will omit the operator expression in the partial order)
\item $\vdash^O: e^O : (\tau^S \hookrightarrow \tau^S, \prec^O)$ a typing relation between elements $e^O \in E^O$, stream types $\tau^S \in [T^S]$, and partial orders $\prec^O \in ORD^O$
\end{itemize}

We augment this with the following definitions:
Given $L^O = (L^C, E^O, \rightarrow^O, ORD^O, \vdash^O)$, we define:
\begin{itemize}
\item The set of operator types: $T^O = \{ \tau_i \hookrightarrow \tau_o, \prec | \tau_i, \tau_o \in [T^S]~\land \prec \in ORD^O \}$
\item The small-step relation $\rightarrow^O = \{ ((I, e, O), (I', e', O \concat O'))~|~(I, e) \rightarrow^\delta (I', e', O') \}$
\item The typing relation on small-step configurations:\\ \inferrule{\vdash^O e : ((\tau_i, B_i) \ldots \hookrightarrow (\tau_o \ldots, B_o), \prec^O) \\ I \in (\tau_i \times \ldots) \\ O \in (\tau_o \times \ldots)}{\vdash^\rightarrow: (I, e, O) : (\tau_i \ldots \hookrightarrow \tau_o \ldots, \prec^O)} 
\end{itemize}

We further constrain $L^O$ via the following well-formedness condition:
$$\forall~{\prec} \in ORD^O.~\prec\text{ is finite and downwards-closed}$$

We also require, $\forall{e, e' \in E^O, \tau \in T^O, I,I',O,O' \in [C]}.~ \vdash^\rightarrow (I, e, O): \tau \land (I, e, O) \rightarrow^O (I', e', O')$ (For all well-typed expressions which step):
\begin{itemize}
\item $\rightarrow^O$ must be confluent
\item $\vdash^\rightarrow (I', e', O') : \tau$ (type preservation)
\item $\tau = (\ldots, \prec) \implies (e', I') \prec (e, I)$ (steps reduce the operator or its inputs)
\end{itemize}

Let us break down the intuition behind these properties. Every operator has a type with several input stream types and output stream types. The semantics of each operator are defined by the small-step relation $\rightarrow^\delta$, where the input and operator expression (which may carry state) are used to produce an updated input, operator expression, and an output collection. The small-step relation $\rightarrow^O$ transforms this relation into a classic operational semantics form, where the output generated by $\rightarrow^\delta$ is concatenated to the existing output (this concatenated form will be key to \cref{def:operator_eager}).

A key property of operators is the confluence of $\rightarrow^O$. In Flo, we \textbf{do not} require there to be a unique small step that can be taken for a given input and operator expression. For example, when processing a set of values, an operator may choose to process them in any order. But confluence guarantees that there exists some later state $(I', e', O')$ which all traces of small steps starting from $(I, e, O)$ will eventually reach. For operators that do have this non-determinism, proofs of this property typically involve a commutativity argument over the order of processing inputs.

Each operator also has a partial order over the operator expression and its inputs $\prec$, which is provided by the typing relation $\vdash^O$ and must be preserved across small-steps. We can use this to prove our first property on operators in $L^O$, that they always reach a stuck state in finite steps:

\begin{lemma}[Operator Stuck State]
\label{lem:operator_stuck}
Given an operator $op$, for all input states $I$ and output states $O$, there is a finite number of small steps that can be taken before no more small steps can be applied.
\end{lemma}

\begin{proof}
We leverage the partial order for this operator $\prec$. Since there are a finite number of operator expressions and collection values smaller than the initial state, and each step reduces the expression or its input, and the order is preserved across steps, there must be a finite number of total steps that can be taken before either no step applies or there is no smaller operator or input in the partial order.
\end{proof}

Note that our definition for stuck state does not require the expression to be reduced to some terminating form, such as the inputs all being empty. We only require that no more steps can be taken, which allows us to further loosen the requirements for collections; there is no need to define a unique bottom value, for example. Combined with the confluence of small-steps, this implies that every operator will eventually reach a unique stuck state.

\subsection{Eager Execution}
Flo hinges on two key properties that enable safe and progressive execution over streaming inputs: \textbf{eager execution} and \textbf{streaming progress}. The first guarantees that if new data arrives \emph{after} partial inputs have already been processed, then we can safely \emph{resume} the execution of the Flo program while arriving at a deterministic result. The second guarantees the program will never block on the fixedness of an input that may never become fixed. In \cref{sec:composition}, we will prove that both of these properties are true of \textbf{well-typed} graphs and Flo as a whole.

Eager execution avoids the situation where all input to an operator must be computed before the operator can begin execution. Instead, we require all operators to prove that they can begin processing partial inputs and receive additional data later via concatenation, while still producing the same result as if all the data was present from the start. This enables flexibility for scheduling and ensures that the outputs of a Flo program are deterministic even if an arbitrary number of small steps are run during each iteration of the event loop.

\newcommand{\cfg}{\ensuremath{\mathit{cfg}}}

\begin{definition}[Eager Execution]
\label{def:operator_eager}
Consider an operator $op \in E^O$. For all inputs $I \in [C]$, outputs $O \in [C]$, concatenated collection $\Delta \in [C]$, updated operator $op' \in E^O$, input collection, $I' \in [C]$, output collection $O' \in [C]$ such that $$(I, op, O) \rightarrow^O (I', op', O')\text{ and }(I \concat \Delta, op, O) \rightarrow^O (I'', op'', O'')$$

there exists a stuck state $(I''', op''', O''')$ such that

$$(I' \concat \Delta, op', O') {\rightarrow^{O*}}~(I''', op''', O''')$$

and $$(I'', op'', O'') \rightarrow^{O*}~(I''', op''', O''')$$
\end{definition}

Note that a simple inductive extension of this property tells us that we can introduce a single additional chunk of data of any size interleaved with executing small steps for the operator, and still end up in the same stuck state as if the data was present from the start. 
A further inductive argument says that if we have several chunks to concatenate, they can be introduced at any time interleaved with steps of the operator while still arriving at the same stuck state.

\subsection{Streaming Progress}
\label{sec:streaming_progress}

Streaming progress is a more challenging property to define. Unlike classic correctness properties such as determinism, streaming progress is focused on ensuring that outputs are kept \emph{fresh} with respect to certain inputs. Let us first formally define \emph{freshness} as \textbf{output maximality}.

\begin{definition}[Output Maximality]
We are given a well-typed (according to $\vdash^\rightarrow$) small-step configuration $((i_0 \ldots i_n), op, O)$ and well-typed final outputs $o_0' \ldots o_m'$ such that:

$((i_0, \ldots, i_n), op, O) \rightarrow^{O*} (I', op', (o_0', \ldots, o_m'))$ and $(I', op', (o_0', \ldots, o_m'))$ is stuck.

Then the given output $o_0' \ldots o_m'$ is \textbf{maximal} if $$((\mathit{fix}(i_0), \ldots, \mathit{fix}(i_n)), op, O) \rightarrow^{O*} ((i_0'', \ldots, i_n''), op'', (\mathit{fix}(o_0'), \ldots, \mathit{fix}(o_m')))$$

and $((i_0'', \ldots, i_n''), op'', (\mathit{fix}(o_0'), \ldots, \mathit{fix}(o_m')))$ is stuck.
    
\end{definition}

Consider our motivating example. Some operators (\texttt{scan}) can satisfy Output Maximality for all inputs because at any point in the execution, we can reach a state where all outputs are released, and no more outputs would be released if the input became fixed. But other operators (\texttt{fold}) cannot satisfy Output Maximality for all inputs, because we never reach a state with any outputs released unless the input is fixed, at which point the output is released (and hence changes).

This is where the stream types we introduced earlier come in, which will allow us to define a property for streaming progress that works for all operators. 
Each operator annotates its inputs and outputs with boundedness flags. Intuitively, if an \textbf{input} is \textbf{unbounded}, we want to prevent the problem we have illustrated with \texttt{fold}: we do not want the operator to block until the input becomes fixed.
By contrast, if an input is \textbf{bounded}, it may make sense for an operator (e.g., \texttt{fold}) to withhold some outputs until the input becomes fixed. 

Output Maximality and stream types together enable us to
ensure that an operator always keeps its outputs as \emph{fresh} as possible: bounded inputs are guaranteed to produce outputs (after becoming fixed), as are unbounded inputs (since they do not block on fixedness).

Finally, to enable composition across multiple operators, we want to derive restrictions on the outputs from input properties.
Once the \textbf{bounded inputs} are fixed, the \textbf{bounded outputs} must become fixed in a finite number of steps to avoid blocking downstream operators. With that intuition in place, we formally define streaming progress in terms of Output Maximality:

\begin{definition}[Streaming Progress]
\label{def:operator_streaming_progress}

Consider a well-typed operator $op$ with type $\vdash^O op: ((I_0, B_{I,0}) \ldots (I_n, B_{I,n})) \hookrightarrow ((O_0, B_{O,0}) \ldots (O_m, B_{O,m}))$. Consider all well-typed inputs $i_0 \ldots i_n \in C$ such that $B_{I,j} = \mathbf{B} \implies \mathit{fixed}(i_j)$ (the bounded \textbf{inputs} are fixed).

Let us also consider all well-typed initial outputs $O$ and final outputs $o_0' \ldots o_m'$, such that:
$$((i_0, \ldots, i_n), op, O) \rightarrow^{O*} (I', op', (o_0', \ldots, o_m'))$$

and $(I', op', (o_0', \ldots, o_m'))$ is stuck. Then the operator $op$ satisfies streaming progress if:
\begin{itemize}
\item $o_0' \ldots o_m'$ are \textbf{maximal} for the operator $op$ with inputs $i_0 \ldots i_n$ and initial outputs $O$
\item $\forall{j}.~B_{O,j} = \mathbf{B} \implies \mathit{fixed}(o_j')$ (the bounded \textbf{outputs} are fixed)
\end{itemize}
\end{definition}

Any operator in an implementation of Flo must satisfy these properties. We will show in the next section that these properties are automatically preserved when composing operators into graphs, which alleviates any further proof burden for the implementation.

\section{Composition with Graphs}
\label{sec:composition}
Programs in Flo are formed by composing operators into a directed-acyclic graph, where each node is an operator and each edge captures an intermediate collection of data elements. In Flo, we express these directed acyclic graphs as expressions of $L^G$ through recursive constructs for sequential and parallel composition, such as in \cref{fig:dataflow_graph}.

\begin{figure}[h]
\begin{minipage}{0.45\textwidth}
\includegraphics[page=1,trim={14cm 12.5cm 18cm 8cm},clip,width=\textwidth]{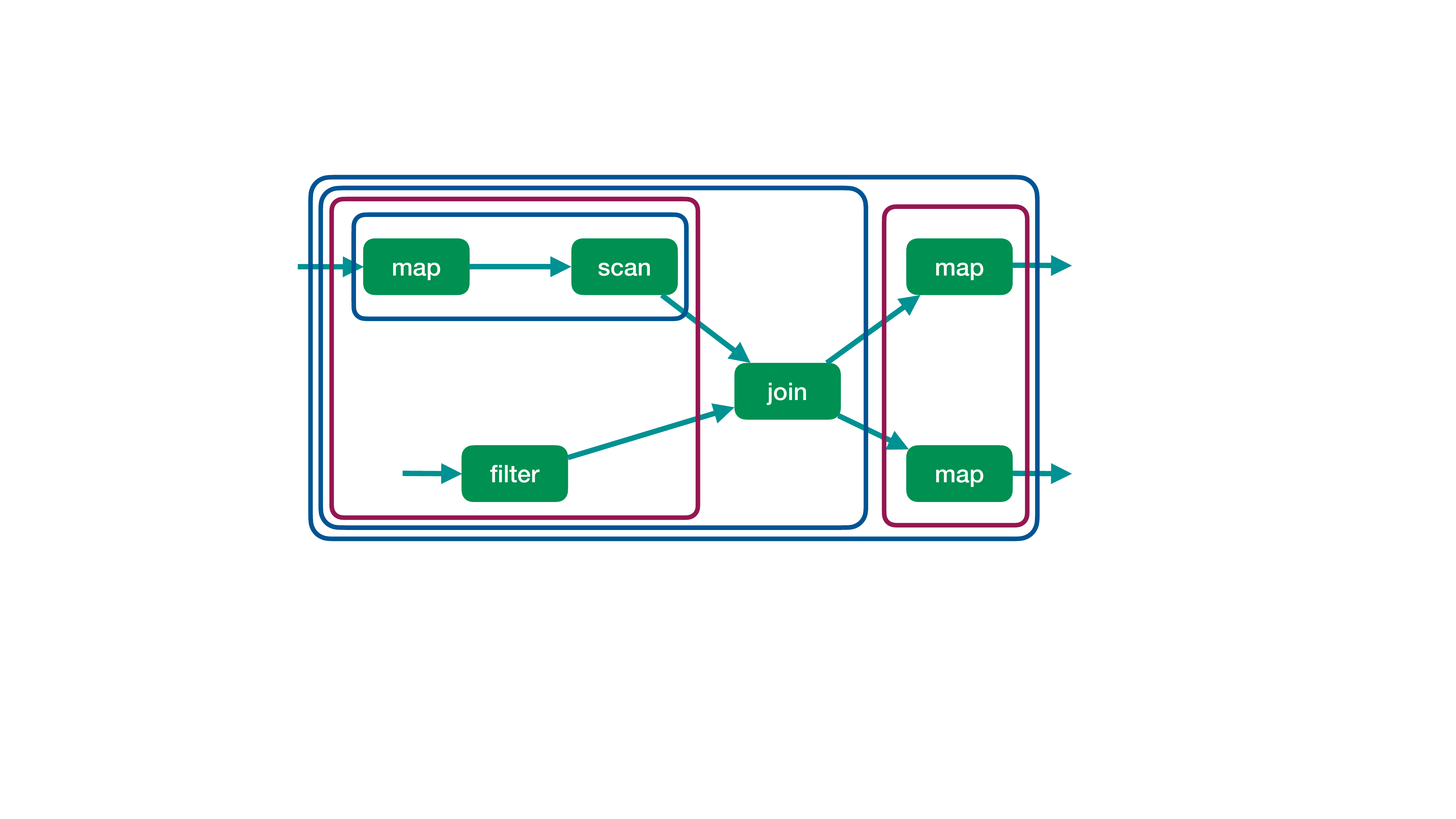}
\end{minipage}
\hspace{0.03\textwidth}
\begin{minipage}{0.35\textwidth}
\begin{verbatim}
(
  (
    (({_} map); ({_} scan)) |
    ({_} filter));
  ({_} join));
(({_} map) | ({_} map))
\end{verbatim}
\end{minipage}

\caption{A dataflow graph and its decomposition into an expression in our language (with parentheses for clarity). Magenta boxes represent parallel composition and blue boxes represent sequential composition.}
\label{fig:dataflow_graph}
\end{figure}

Unlike before, the graph language $L^G$ is not parameterized on any new definitions, and can be directly layered on any instance of an operator language $L^O = (L^C, E^O, \rightarrow^\delta, ORD^O, \vdash^O)$. We layer on this language a few additional constructs:
\begin{itemize}
\item $E^G$: the language of graph expressions, which are syntactic objects (\cref{fig:graph_grammar})
\item $\vdash: e^G : (\tau^S \hookrightarrow \tau^S, \prec)$ a typing relation between elements $e^G \in E^G$, stream types $\tau^S \in [T^S]$, and partial orders $\prec \in \bigcup_{n \in \mathbb{N}} (ORD^O)^n$ (\cref{fig:graph_types})
\item $e^g \rightarrow^\Delta (e^g, O)$, a small-step operational semantics where $O \in [C]$ and $e^g \in E^G$ (\cref{fig:graph_operational})
\end{itemize}

We will also augment this with the small-step relation: $\rightarrow~= \{ ((g, O), (g', O \concat O')) | g \rightarrow^\Delta (g', O') \}$

Sequential composition passes the outputs of one subgraph into the inputs of the other, and is the primary way that operators can be chained together in a Flo program. Parallel composition makes it possible to capture portions of the graph where several operators can be run independently on separate sets of inputs to produce separate outputs. We lay out the grammar for graphs in \cref{fig:graph_grammar}.

\newcommand{\es}{\ensuremath{e}}

\begin{figure}[h]
    \centering
    \begin{minipage}[t]{.9\textwidth}
    \begin{align*}
    \es ::=~& \es| \es~∣~\es ; \es~∣~\{S\}[O]
    \end{align*}
    \end{minipage}

    \caption{The grammar for graphs of a Flo program, where $S \in [E^C]$ and $O \in E^O$.}
    \label{fig:graph_grammar}
\end{figure}

Note that we include a state term $S$, which collects inputs to an operator.  This term will be essential when formalizing our small-step semantics, which needs to reason about buffered inputs at an {\it arbitrary} position in a graph. 
Our type system models graphs in terms of their input and output stream types, and a partial order over inputs like for operators. We list the typing rules for graphs in \cref{fig:graph_types} and small-step operational semantics in \cref{fig:graph_operational}. In our semantics, we will use $\cdot$ to denote tuple concatenation, when dealing with types or values.

\begin{figure}[h]
    \begin{mathpar}\small
    {
        \inferrule[sequence]
        { ⊢ \es₁ : (I_1 ↪ O_1, \prec_1) \\ ⊢ \es₂ : (I_2 ↪ O_2, \prec_2) \\\\ O_1 \leq I_2}
        { ⊢ \es₁;\es₂ : (I_1 ↪ O_2, \prec_1) }
    }

    {   
        \inferrule[par]
        {⊢ \es₁ : (I_1 ↪ O_1, \prec_1) \\ ⊢ \es₂ : (I_2 ↪ O_2, \prec_2)}
        { ⊢ \es₁ ∣ \es₂ : (I_1 \cdot I_2 ↪ O_1 \cdot O_2, \prec_1 \cdot \prec_2) }
    }

    {
        \inferrule[operator]
        {\vdash^O op : (I ↪  O, \prec)
        \\
        I = ((S_0, B_0), \ldots (S_n, B_n))
        \\ \forall{i}.~\mathit{type}^C(s_i) = S_{i}}
        { ⊢ \{(s_0, \ldots, s_n)\}[op] : (I ↪ O, (\prec))}
    }
    \end{mathpar}

  \caption{Type semantics for graphs of a Flo program.}
  \label{fig:graph_types}
\end{figure}

\newcommand{\inputs}{\ensuremath\textit{inputs}}
\newcommand{\newinputs}{\ensuremath\textit{setinput}}

\begin{figure}[h]
    \begin{align*}
        \inputs(\es₁;\es₂) ≜&~\inputs(\es₁)\\
        \inputs(\es₁∣\es₂) ≜&~\inputs(\es₁) \cdot \inputs(\es₂)\\
        \inputs(\{I\}[op]) ≜&~I
    \end{align*}

    \begin{align*}
        \newinputs(\es₁;\es₂,I) ≜&~\newinputs(\es₁,I);\es₂&\\
        \newinputs(\es₁∣\es₂,I_1 \cdot I_2) ≜&~\newinputs(\es₁,I_1)∣\newinputs(\es₂,I_2)&\\
        \newinputs(\{I\}[op],I') ≜&~\{I'\}[op]&\textit{when}~|I| = |I'|
    \end{align*}

    \begin{mathpar}\small
    
    {
        \inferrule[sequence-left]
        {\es₁ \rightarrow^\Delta ({\es}₁',I')}
        {(\es₁;\es₂) \rightarrow^\Delta ({\es₁}';\newinputs(\es₂, \lfloor\llbracket\inputs(\es₂)\rrbracket^C \concat I'\rfloor^C),\emptydelta)}
    }

    {
        \inferrule[sequence-right]
        {\es₂ \rightarrow^\Delta (\es₂',O')}
        {(\es₁;\es₂) \rightarrow^\Delta (\es₁;\es₂',O')}
    }

    {
        \inferrule[par-left]
        {\es₁ \rightarrow^\Delta (\es₁',O₁') }
        {(\es₁ ∣ \es₂) \rightarrow^\Delta (\es₁' ∣ \es₂, O₁',\emptydelta)}
    }

    {   
        \inferrule[par-right]
        {\es_2 \rightarrow^\Delta (\es_2',O_2') }
        {(\es₁ ∣ \es₂) \rightarrow^\Delta (\es₁ ∣ \es₂', \emptydelta,O₂')}
    }

    {
        \inferrule[operator]
        {(\llbracket{I}\rrbracket^C, op) \rightarrow^\delta (I',op',O')}
        {(\{I\}[op]) \rightarrow^\Delta (\{\lfloor{I'}\rfloor^C\}[op'],O')}
    }
    
    \end{mathpar}

  \caption{Small-step semantics for graphs of a Flo program.}
  \label{fig:graph_operational}
\end{figure}

Before we continue, let us prove that graphs satisfy preservation.

\begin{lemma}[Graph Preservation]
\label{lem:graph_preservation}

Given a graph $g$ of type $(I ↪ O, \prec)$, output state $S = (s_0 \ldots s_n)$, and updated output state $S' = (s_0' \ldots s_n')$ such that $O = ((T_0, \_) \ldots (T_n, \_))$ and $\forall{i}.~\mathit{type}^C(s_i) = T_i$, if $(g, S)$ takes a step to $(g', S')$, then $g'$ is also of type $(I ↪ O, \prec)$ and $\forall{i}.~\mathit{type}^C(s_i') = T_i$.
\end{lemma}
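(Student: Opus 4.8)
The plan is to prove this by structural induction on the graph expression $g$. Since both the typing rules of \cref{fig:graph_types} and the step rules of \cref{fig:graph_operational} are syntax-directed, this is equivalent to inducting on the derivation of $\vdash g : (I \hookrightarrow O, \prec)$, with an inner case analysis on which $\rightarrow^\Delta$ rule fires. Throughout I use that $(g, S) \to (g', S')$ unfolds, by the definition of $\to$, into $g \rightarrow^\Delta (g', O')$ together with $S' = S \concat O'$. Preserving the output-state typing ($\forall i.~\mathit{type}^C(s_i') = T_i$) therefore reduces to showing that each increment $O_i'$ is concatenated into a component of the correct collection type, which follows from the third collection well-formedness condition (collection types are closed under $\concat$), together with the lift/lower round-trip when the state is kept as expressions. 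The substantive work is re-deriving the type of $g'$.

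For the base case $g = \{I\}[op]$, with $I = (s_0,\ldots,s_n)$ the buffered input expressions and graph type $(\tau_I \hookrightarrow \tau_O, \prec)$, the only applicable rule is operator, so $(\llbracket I \rrbracket^C, op) \rightarrow^\delta (I', op', O')$ and $g' = \{\lfloor I' \rfloor^C\}[op']$. I would first repackage this as a step of $\rightarrow^O$: well-typedness of $g$ gives $\vdash^O op : (\tau_I \hookrightarrow \tau_O, \prec)$ with $\mathit{type}^C(s_i)$ equal to the $i$-th input collection type, so by the first collection well-formedness condition $\llbracket I \rrbracket^C$ lies componentwise in the input collection types, and (using well-typedness of the output state) $\vdash^\rightarrow (\llbracket I \rrbracket^C, op, \cdot)$ holds. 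Operator type preservation, assumed of $L^O$, then yields $\vdash^O op' : (\tau_I \hookrightarrow \tau_O, \prec)$ and $I_i'$ in the $i$-th input collection type. Re-applying the operator typing rule to $g'$ needs $\mathit{type}^C(\lfloor I_i' \rfloor^C)$ to equal that same collection type, i.e. that lowering a value of a collection type returns an expression of that type; I discharge this using the collection round-trip well-formedness linking $\llbracket \rrbracket^C$, $\lfloor \rfloor^C$, and $\mathit{type}^C$.

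For the inductive cases, sequence-right and the two par rules are routine: the step occurs inside one subgraph with an appropriately split output state, the induction hypothesis re-types that subgraph, and the matching typing rule reassembles the whole with the untouched side, whose output component is extended only by $\emptydelta$ and hence unchanged since $c \concat \emptydelta = c$. The delicate case is sequence-left, where $\es_1 \rightarrow^\Delta (\es_1', I')$ and the increment $I'$ is routed into the downstream buffer via $\newinputs(\es_2, \lfloor \llbracket \inputs(\es_2) \rrbracket^C \concat I' \rfloor^C)$, while the sequence emits $\emptydelta$. To type the result I need two auxiliary facts, each by a side induction on graph structure: (i) a well-typed graph's extracted inputs $\inputs(\es_2)$ are componentwise of the collection types of $I_2$; and (ii) $\newinputs$ preserves a graph's type when handed inputs of the correct collection types. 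Given these, the induction hypothesis on $\es_1$ makes $I'$ of the collection types of $O_1$; the side condition $O_1 \leq I_2$ of the sequence rule forces these to coincide with those of $I_2$ (subtyping only relaxes the boundedness flag, per \cref{fig:stream_types}); closure under $\concat$ keeps $\llbracket \inputs(\es_2) \rrbracket^C \concat I'$ in type; and the round-trip well-formedness keeps $\lfloor - \rfloor^C$ in type. Then sequence re-types $\es_1'; \newinputs(\es_2, -)$ at the original $(I_1 \hookrightarrow O_2, \prec_1)$.

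The main obstacle I expect is the bookkeeping at the value/expression boundary shared by the operator and sequence-left cases: the operator semantics live over collection \emph{values} in $C$ and $T^C$, whereas graph typing speaks of \emph{expressions} through $\mathit{type}^C$, and the two are bridged only by $\llbracket \rrbracket^C$ and the \emph{partial} $\lfloor \rfloor^C$. Making this rigorous—showing $\lfloor \rfloor^C$ is defined on every value actually reached and that $\mathit{type}^C(\lfloor c \rfloor^C)$ returns the collection type from which $c$ came—is where the collection well-formedness conditions must be deployed with care, and is the step most likely to require a mild strengthening of the stated invariants (for instance, explicitly assuming that lowering respects collection-type membership).
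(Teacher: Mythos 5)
Your proposal is correct and takes essentially the same approach as the paper: structural induction over the graph, with the operator case discharged by operator type preservation, the sequence-left case handled via the well-formedness of the denotational lifting and syntactic lowering (plus the subtyping side condition and closure of collection types under $\concat$), and the remaining cases treated as routine recompositions. Your explicit flagging of the value/expression boundary—that $\lfloor\rfloor^C$ is partial and that lowering must respect collection-type membership—is a point the paper's own proof passes over with a single parenthetical appeal to well-formedness, so your extra care there is warranted rather than a deviation.
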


\begin{proof}
We can prove this by structural induction over the graph.

\textbf{Base Case:} A graph with a single operator. By operator preservation, we know that the type of $I$ is the same as the type of $I'$, that $op'$ has the same type, and that $O'$ has the same type as $O$. Therefore, the graph as a whole has the same type and the output is of the correct type.

\textbf{Inductive Step:} Proof by cases:

\textbf{Sequential Composition}: If we apply the sequence-left rule, then by induction we know that $e_1$ has the same type as $e_1'$, and $I'$ has the same types as the inputs of $e_2$. Therefore, when we set the inputs of $e_2$ to $I'$, we preserve the typing (due to well-formedness of the denotational lifting and syntactical lowering). Since the output is unchanged, we satisfy preservation.

If we apply the sequence-right rule, then by induction we know that $e_2$ has the same type as $e_2'$, and the output has the same type due to concatenation. Therefore, we satisfy preservation.

\textbf{Parallel Composition}: In both rules, we use induction to know the types of both sides are preserved. The typing rule for parallel simply composes these types, so we are done.
\end{proof}

\subsection{Graph Stuck State}
Now, let us extend the properties we require of operators to graphs as a whole. First, we will extend Operator Stuck State (\cref{lem:operator_stuck}).

\begin{lemma}[Graph Stuck State]
\label{lem:graph_stuck}
Given a graph initialized with a fixed set of input collection values, running the graph will eventually reach a stuck state where no additional steps can be taken.
\end{lemma}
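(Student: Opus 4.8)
The plan is to prove this by structural induction on the graph expression $e$, appealing to \cref{lem:operator_stuck} in the base case and to the induction hypothesis for the compositional cases. Throughout, I read ``initialized with a fixed set of input collection values'' as the statement that the graph's \emph{external} inputs are held constant during the run (a single iteration of the event loop, with no concatenation of fresh data into the top-level inputs). The crucial observation that makes the induction go through is that the hypothesis applies to a subgraph in \emph{any} well-typed state, not just a freshly initialized one, exactly as \cref{lem:operator_stuck} quantifies over all input and output states $I, O$; an operator expression already carries its own internal state, so ``running $e_2$ from its current state'' is a legitimate instance of the lemma.

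For the base case, a graph $\{s\}[op]$ steps by the \textsc{operator} rule, and each $\rightarrow^\Delta$ step corresponds to exactly one $\rightarrow^\delta$ step of $op$; the graph is stuck precisely when $op$ admits no $\rightarrow^\delta$ step, so \cref{lem:operator_stuck} gives termination directly (the lift/lower well-formedness conditions ensure that moving between the stored syntactic inputs and the denoted values $\llbracket I \rrbracket^C$ introduces no extra steps). Parallel composition $e_1 \mid e_2$ is handled by an independence/projection argument: the \textsc{par-left} and \textsc{par-right} rules step $e_1$ and $e_2$ without touching the other side or its inputs, and the external inputs split as $I_1 \cdot I_2$ with each half fixed. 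If an execution of $e_1 \mid e_2$ were infinite, then infinitely many of its steps would be \textsc{par-left} (or infinitely many \textsc{par-right}); projecting those steps yields an infinite $\rightarrow^\Delta$-execution of $e_1$ (resp.\ $e_2$) from its fixed inputs, contradicting the induction hypothesis. Hence the execution is finite, and the graph is stuck once both subgraphs are.

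Sequential composition $e_1 ; e_2$ is the main obstacle, because the \textsc{sequence-left} rule concatenates $e_1$'s output into $e_2$'s inputs (via $\newinputs(e_2, \lfloor\llbracket\inputs(e_2)\rrbracket^C \concat I'\rfloor^C)$), so $e_2$'s inputs \emph{grow} as the run proceeds and one cannot apply the induction hypothesis to $e_2$ once and for all. I would resolve this with the same projection argument, structured as a case split. Suppose toward contradiction that some execution of $e_1 ; e_2$ is infinite. A \textsc{sequence-left} step is nothing but an $e_1$-step on the fixed external inputs, and \textsc{sequence-right} steps leave $e_1$ untouched, so the \textsc{sequence-left} steps project (in order) to a valid $\rightarrow^\Delta$-execution of $e_1$. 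If there are infinitely many of them, this is an infinite execution of $e_1$ from fixed inputs, contradicting the induction hypothesis. Otherwise \textsc{sequence-left} fires only finitely often; after its last firing, $e_2$'s inputs are final and constant, and the remaining (infinitely many) steps are all \textsc{sequence-right}, which project to an infinite $\rightarrow^\Delta$-execution of $e_2$ from a state with now-fixed inputs, again contradicting the induction hypothesis. Either way we reach a contradiction, so every execution is finite, and $e_1 ; e_2$ is stuck exactly when neither rule applies, i.e.\ when $e_1$ and $e_2$ are both stuck.

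I expect the sequential case to be where the real care is needed, and the point I would be most careful to justify is precisely that the induction hypothesis is being invoked on $e_2$ in a mid-run state with enlarged inputs rather than at initialization — which is sound for the reason noted in the first paragraph. If a more constructive termination witness were preferred over the contradiction argument, the same facts repackage into a lexicographic measure $(\mu_1, \mu_2)$ on $\mathbb{N}^2$, where $\mu_1$ is the number of steps $e_1$ still has before getting stuck and $\mu_2$ the number $e_2$ still has from its current (momentarily fixed) inputs: a \textsc{sequence-left} step strictly decreases $\mu_1$ and so dominates in the lexicographic order even though it may raise $\mu_2$ by enlarging $e_2$'s inputs, while a \textsc{sequence-right} step leaves $\mu_1$ fixed and strictly decreases $\mu_2$. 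This measure is well-founded and subsumes the intuitive ``phase'' picture in which the run decomposes into the finitely many maximal intervals between successive \textsc{sequence-left} steps, each internally finite by the induction hypothesis on $e_2$.
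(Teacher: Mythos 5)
Your proof is correct and takes essentially the same approach as the paper's: structural induction with the base case discharged by \cref{lem:operator_stuck}, independence for parallel composition, and, for sequential composition, the key observation that the right subgraph can never re-enable the left, so \textsc{sequence-left} fires only finitely often and afterwards the right subgraph runs from constant inputs and terminates by the induction hypothesis. Your projection-to-contradiction argument (and the lexicographic measure variant) is just a more rigorous rendering of the paper's informal ``cycle back to the left until it is stuck'' reasoning, including the explicit justification that the induction hypothesis may be invoked on a mid-run state, which the paper leaves implicit.
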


\begin{proof}
We can prove this by structural induction over the graph.

\textbf{Base Case:} A graph with a single operator. By \cref{lem:operator_stuck}.

\textbf{Inductive Step:} A graph such that its subgraphs satisfy Graph Stuck State. Proof by cases:

\textbf{Sequential Composition}: There are only two small steps that can be taken at any point, for the left or right. If we only step one of the two subgraphs, by induction that side will eventually reach a stuck state. If the left side reaches a stuck state, then running the right side will never re-enable the left side by the definition of $\rightarrow$. If the right side reaches a stuck state, we may be able to run the left side which may re-enable the right side, but this will cycle back to the left and eventually the left side will be stuck. Therefore, the graph as a whole will reach a stuck state.

\textbf{Parallel Composition}: The two subgraphs are independent, and so by the inductive hypothesis we know that both will eventually reach a stuck state, and their composition is a stuck state.
\end{proof}

\subsection{Determinism and Eager Execution}
The most significant change between reasoning about operators in isolation and the composition of them is that at any point when executing a graph, there may be multiple small steps for each operator that can be taken. We need to prove we can non-deterministically execute these operators while arriving at the same output. To prove this for all graphs, we will also need to extend Eager Execution to graphs. These proofs are mutually recursive, so we will prove them simultaneously. Both our definitions look nearly identical to those for operators, just with the use of the general small step relation rather than only for operators.

A quick aside on notation. In this section, we will use the shorthand $\{ I \} g$ to denote a graph $g$ whose inputs are set to $I$, so $\{ I \} g = \mathit{setinput}(g, I)$.

\begin{definition}[Determinism]
\label{def:graph_determinism}
Consider a graph $g$. For all inputs $I$ and initial outputs $O$ where a small step for $(\{ I \} g, O)$ exists, there exists a later state $g'$, inputs $I'$, and outputs $O'$ such that in every trace of small steps $(\{I\} g, O) \rightarrow^* (\{I'\} g', O')$ we eventually reach this later state.
\end{definition}

Note that combined with stuck states (\cref{lem:graph_stuck}), this implies that every graph will eventually reach a \textbf{unique} stuck state. This is because we can always take a series of steps to arrive at the same later state, and eventually we will reach a point where no more steps can be taken.

\begin{definition}[Eager Execution]
\label{def:graph_eager}
Consider a graph $g$. For all input streams $I$, output streams $O$, delta set $\Delta$, updated graph $g'$, input stream, $I'$, and output stream $O'$ such that

$$(\{ I \} g, O) \rightarrow (\{ I' \} g', O')$$ there exists a stuck state $f$ such that

$$(\{ I \concat \Delta \} g, O) {\rightarrow^{*}}~f\text{ and }(\{ I' \concat \Delta \} g', O') \rightarrow^{*}~f$$
\end{definition}

\begin{lemma}
\label{lem:graph_determinism_eager}
Consider any expression. It must satisfy:
\begin{enumerate}
    \item Determinism
    \item Eager Execution
\end{enumerate}
\end{lemma}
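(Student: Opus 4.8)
The plan is to prove both properties simultaneously by structural induction on the graph expression (\cref{fig:graph_grammar}), since---as the paper notes---the two conclusions are mutually dependent: establishing Determinism (\cref{def:graph_determinism}) for a sequential composition will itself require Eager Execution (\cref{def:graph_eager}) for the right subgraph. At each node of the induction I therefore carry both induction hypotheses for the immediate subgraphs and discharge both conclusions for the composite. Throughout, Graph Preservation (\cref{lem:graph_preservation}) keeps the intermediate configurations well-typed so that the hypotheses remain applicable, and Graph Stuck State (\cref{lem:graph_stuck}) supplies termination.

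The two easy cases are the operator base case and parallel composition. For a single operator $\{S\}[op]$, the \textsc{operator} rule of \cref{fig:graph_operational} merely wraps the operator small-step $\rightarrow^\delta$, so (modulo the $\llbracket\rrbracket^C/\lfloor\rfloor^C$ coercions, which are mutually inverse by well-formedness of $L^C$) the graph relation $\rightarrow$ coincides with $\rightarrow^O$; Determinism is then the confluence of $\rightarrow^O$ required of every $L^O$, and Eager Execution is exactly the operator-level \cref{def:operator_eager}. For parallel composition $e_1 \mid e_2$, the \textsc{par-left} and \textsc{par-right} rules act on disjoint coordinates of the input and output tuples and never read each other's state, so any left step commutes with any right step. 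Determinism thus lifts coordinatewise---the witnessing later state is the pair of later states given by the two hypotheses, reached by every interleaving---and for Eager Execution I split the incoming delta as $\Delta = \Delta_1 \cdot \Delta_2$ along the parallel boundary, apply the hypothesis on each side, and recombine.

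Sequential composition $e_1 ; e_2$ is the crux. The difficulty is that \textsc{sequence-left} does not merely step $e_1$: it concatenates $e_1$'s freshly produced output $I'$ onto the buffered inputs of $e_2$ via $\newinputs(e_2, \lfloor\llbracket\inputs(e_2)\rrbracket^C \concat I'\rfloor^C)$, so a left step \emph{mutates the input of the right subgraph}, and the left/right diamond does not close by naive commutation. This is precisely where the simultaneous induction pays off: I close the diamond using the Eager Execution hypothesis for $e_2$, which guarantees that absorbing the concatenated chunk $I'$ and then stepping $e_2$ reaches the same stuck state as stepping $e_2$ first and absorbing $I'$ afterward. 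Combined with Determinism of each subgraph and termination from \cref{lem:graph_stuck}, this produces a common later (indeed unique stuck) state reached by every trace. For Eager Execution of $e_1 ; e_2$ I use that $\inputs(e_1 ; e_2) = \inputs(e_1)$, so the external delta $\Delta$ lands on $e_1$; I first apply the $e_1$ hypothesis to absorb $\Delta$, which may push an additional increment into $e_2$'s buffer, and then apply the $e_2$ hypothesis to absorb that downstream increment, composing the two to reach the same stuck state as if $\Delta$ had been present from the start.

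I expect the sequential Determinism argument to be the main obstacle: it requires carefully tracking how the left-produced output $I'$ is threaded through $\newinputs$ and the $\lfloor\llbracket\inputs(e_2)\rrbracket^C \concat I'\rfloor^C$ coercion, and verifying that the Eager Execution hypothesis for $e_2$ applies to exactly this concatenated increment rather than to some rebuffered approximation of it. Getting the bookkeeping of buffered state and the interleaving of left/right steps right---so that the appeal to the right subgraph's Eager Execution is literally justified---is the delicate part; the parallel and operator cases are by comparison routine.
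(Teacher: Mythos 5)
Your proposal is correct and follows essentially the same route as the paper: simultaneous structural induction on both properties, identical operator and parallel cases, and the same key move of invoking the right subgraph's Eager Execution hypothesis to resolve the interference caused by \textsc{sequence-left} pushing the left subgraph's output into $e_2$'s input buffer. The only difference is one of formalization: the paper phrases the sequential Determinism argument as a global rewriting of terminating traces into the canonical form $a_0~\ldots~a_n~b^*$ (all left steps pulled to the front, then determinism of each side), whereas you phrase it as local diamond-closing combined with termination from \cref{lem:graph_stuck}; these are equivalent ways of organizing the same commutation argument.
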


\begin{proof}
We can prove this by structural induction over the graph.

\textbf{Base Case:} A graph with a single operator.
\begin{enumerate}
\item By confluence of $\rightarrow^O$.
\item By \cref{def:operator_eager}.
\end{enumerate}

\textbf{Inductive Step:} A graph such that its subgraphs satisfy both (1) and (2). Proof by cases:

\textbf{Sequential Composition}: a graph of form $a ; b$
\begin{enumerate}
\item We know that there is at least one small-step that can be taken, and the only options are to recursively step $a$ or $b$. Let us define an \emph{execution trace} that captures an ordered sequence of small-steps to take. This trace will have the form ``$(a_i | b)+$'', with each element directing us to take the corresponding small step corresponding to the named subgraph, with the indices for $a$ counting up from $0$. Given a trace $t = $``$s_0 \ldots s_n$'', we define $\rightarrow_t$ to take the steps in order. For each instance of $a_i$, the index lets us uniquely identify the small-step rule that will be applied to $a$. For $b$, the token represents taking any small-step on $b$. We will call a trace after which no more steps can be taken a \emph{terminating trace}.

Next, let us define equivalence between a pair of traces $t_1$ and $t_2$. Two traces are equivalent if executing both on the same initial state results in the same final state, \emph{even} with non-deterministic selection of which small-step to run for each $b$. We will prove that for any pair of terminating traces $t_1$ and $t_2$, the traces are equivalent.

Consider a trace of the form ``$\mathit{prefix}~b~a_i~\ldots~a_j~b^*$''. The execution of this looks like $$(\{I_a^p\} a^p ; \{I_b^p\} b^p, O^p) \rightarrow_{\mathit{prefix}} (\{I_a\} a ; \{I_b\} b, O) \rightarrow_{b} (\{I_a\} a ; \{I_b'\} b', O')$$
$$\rightarrow_{a_{i \ldots j}} (\{I_a'\} a' ; \{I_b''\} b', O') \rightarrow_{b}^{*} (\{I_a'\} a' ; \{I_b'''\} b'', O'')$$

First, by the definition of $\rightarrow^\Delta$, we know that $I_b'' = I_b' \concat \Delta_i \concat \Delta_{i + 1} \ldots$. Then, inductively Eager Execution applied to $b$ lets us rewrite ``$b~a_i~\ldots~a_j~b^*$'' to ``$a_i~\ldots~a_j~b^*$'' (note that the number of trailing $b$ in the rewritten suffix may be arbitrary), because the execution of $a_i~\ldots~a_j$ simply introduces additional data for $b$ to process. This results in the following execution $$(\{I_a^p\} a^p ; \{I_b^p\} b^p, O^p) \rightarrow_{\mathit{prefix}} (\{I_a\} a ; \{I_b\} b, O)$$
$$\rightarrow_{a_{i \ldots j}} (\{I_a'\} a' ; \{I_b \concat \Delta_i \concat \Delta_{i + 1} \ldots\} b, O) \rightarrow_{b}^{*} (\{I_a'\} a' ; \{I_b'''\} b'', O'')$$

Therefore, the trace $\mathit{prefix}~b~a_i~\ldots~a_j~b^*$ is equivalent to $\mathit{prefix}~a_i~\ldots~a_j~b^*$.

If we repeatedly apply this rewrite to both traces to pull all $a_i$ to the front, we will arrive at two traces of the form $a_0~\ldots~a_n~b^*$ and $a_0~\ldots~a_m~b^*$. We know that both original traces are terminating, therefore after running $a_0~\ldots~a_n$ and $a_0~\ldots~a_m$ even though the $b$s between the elements have been removed, there will be no more small steps that can be taken on $a$. By determinism from induction, since $a$ has terminated the traces $a_0~\ldots~a_n$ and $a_0~\ldots~a_m$ result in the same state and are equivalent. Similarly, because our rewrites preserve equivalence, by determinism we know that after running $b^*$ on both traces, we will reach the same final state. Therefore, the traces are equivalent and $a ; b$ satisfies determinism.

\item We can split into cases based on the small step that could be taken.

\textbf{Case 1:} The small step is on $a$. By \cref{def:graph_eager}, we know that we can introduce the delta before or after the small step on $a$ and then continue running small steps for $a$ until reaching the common later state for $a$, which is also our overall later state $f$.

\textbf{Case 2:} The small step is on $b$. If we run the small step, then introduce the delta, let the state immediately after introducing the delta be $f$. If we instead first introduce the delta, then run $b$, the state after is also $f$ because running the small step for $b$ is unaffected by the introduction of the delta.
\end{enumerate}

\textbf{Parallel Composition}: a graph of form $a | b$
\begin{enumerate}
\item The small steps for a parallel composition just run the small steps for either side, which are independent. Therefore by induction both sides will step to a deterministic state.
\item In parallel composition, the introduction of a delta results in independent chunks being added to both sides. If we step the graph first, that just steps one of the sides, so the inductive hypothesis holds on one of the sides and the other side is unaffected.
\end{enumerate}
\end{proof}

\subsection{Streaming Progress}
\begin{lemma}[Streaming Progress for Graphs]
\label{lem:graph_streaming_progress}

Consider a well-typed graph $g$ with type $\vdash g: (((I_0, B_{I,0}) \ldots (I_n, B_{I,n})) \hookrightarrow ((O_0, B_{O,0}) \ldots (O_m, B_{O,m})), \prec)$ such that $\mathit{inputs}(g) = i_0 \ldots i_n$ and $B_{I,j} = \mathbf{B} \implies \mathit{fixed}(i_j)$. Consider all well-typed outputs $O$ and $o_0' \ldots o_m'$ such that $$(g, O) \rightarrow^* (g', (o_0', \ldots, o_m'))$$ and $(g', (o_0', \ldots, o_m'))$ is stuck. Then $o_j'$ must be fixed if $B_{O,j} = \mathbf{B}$ and there must also be a stuck state $$(\{(\mathit{fix}(i_0), \ldots, \mathit{fix}(i_n))\} g, O) \rightarrow^* (g'', (\mathit{fix}(o_0'), \ldots, \mathit{fix}(o_m')))$$    
\end{lemma}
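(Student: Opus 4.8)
The plan is to prove this by structural induction on the graph $g$, mirroring \cref{lem:graph_preservation} and \cref{lem:graph_stuck}, while leaning on the already-established determinism and eager execution (\cref{lem:graph_determinism_eager}), which together guarantee that every graph reaches a \emph{unique} stuck state. Throughout, I would keep the two obligations separate: (i) the bounded outputs of the reached stuck state are fixed, i.e. $B_{O,j} = \mathbf{B} \implies \mathit{fixed}(o_j')$; and (ii) output maximality, i.e. starting from $\mathit{fix}(i_0) \ldots \mathit{fix}(i_n)$ we reach a stuck state whose outputs are $\mathit{fix}(o_0') \ldots \mathit{fix}(o_m')$.

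For the base case of a single operator $\{S\}[op]$, the graph small-step relation coincides with the operator relation, so both obligations follow directly from operator streaming progress (\cref{def:operator_streaming_progress}) instantiated at the same initial outputs $O$. For parallel composition $a \mid b$, the input/output tuples and their boundedness flags partition between the two sides, and by the par-left/par-right rules the subgraphs step independently; I would observe that $a \mid b$ is stuck exactly when both sides are, that each side's bounded inputs form a sub-tuple of the graph's (hence fixed) bounded inputs, apply the inductive hypothesis to each side, and concatenate the per-side conclusions to obtain both obligations.

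The sequential case $a ; b$ is the crux. First I reuse the argument from \cref{lem:graph_stuck} that a stuck state of $a;b$ forces both $a$ and $b$ to be individually stuck. Since $\mathit{inputs}(a;b) = \mathit{inputs}(a)$ and the sequence typing rule assigns $a;b$ the input type of $a$, the bounded inputs of $a$ are precisely the graph's bounded inputs and are therefore fixed; the inductive hypothesis on $a$ then yields that $a$'s bounded outputs are fixed and that $a$ is output-maximal. The essential type-level step is the side condition $O_1 \leq I_2$: inspecting the subtyping rules, the only subtype of a bounded stream type is itself, so each bounded input channel of $b$ must be fed by a bounded output channel of $a$. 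Hence the intermediate collections on $b$'s bounded input channels---assembled by the sequence-left rule as concatenations of $a$'s emitted deltas---are fixed once $a$ is stuck. Applying the inductive hypothesis to $b$ then shows that $b$'s bounded outputs, which are exactly the graph's bounded outputs, are fixed, discharging obligation (i).

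For obligation (ii), maximality of $a$ gives that the fixed-input run of $a$ reaches a stuck state whose outputs are $\mathit{fix}$ of $a$'s original outputs; these flow into $b$, so in the fixed-input run $b$ ultimately sees the fixed versions of the inputs it saw in the original run. The remaining difficulty is that $b$ consumes $a$'s output incrementally rather than all at once, so I would invoke eager execution and determinism (\cref{lem:graph_determinism_eager}) to argue that $b$'s unique stuck state is insensitive to this interleaving and coincides with the state $b$ would reach having received all of $a$'s fixed output up front; maximality of $b$ then rewrites the graph's outputs to $\mathit{fix}(o_0') \ldots \mathit{fix}(o_m')$. I expect the sequential case to be the main obstacle, and within it two points need care: justifying that fixedness of $a$'s bounded \emph{output} collections transfers to fixedness of $b$'s accumulated \emph{input} collections (using that $\mathit{fix}$ distributes over the concatenation performed by sequence-left, together with the ``$\mathbf{B}$ is only its own subtype'' reading of $O_1 \leq I_2$), and cleanly decoupling the incremental feeding of $b$ from the ``all inputs present at the start'' hypothesis that $b$'s maximality demands---which is exactly what eager execution supplies.
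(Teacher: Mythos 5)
Your proposal follows essentially the same route as the paper's proof: structural induction with operator streaming progress (\cref{def:operator_streaming_progress}) as the base case, independence of the two sides for parallel composition, and, for sequential composition, using determinism and eager execution (\cref{lem:graph_determinism_eager}) to restrict attention to schedules that run the left subgraph to its stuck state before the right, then applying the inductive hypothesis first to $a$ (whose fixed bounded outputs discharge the premise for $b$) and then to $b$. Your explicit justification that every bounded input channel of $b$ must be fed by a bounded output channel of $a$ (since the side condition $O_1 \leq I_2$ and the subtyping rules force this) is a detail the paper leaves implicit, and is a welcome addition; the only stray remark is your appeal to $\mathit{fix}$ ``distributing'' over concatenation, which is neither an assumed property of $L^C$ nor needed---the accumulated inputs of $b$ under the normalized schedule are literally the accumulated outputs of the stand-alone run of $a$, so fixedness and maximality transfer directly.
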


\begin{proof}
We can prove this by structural induction over the graph.

\textbf{Base Case:} A graph with a single operator. By \cref{def:operator_streaming_progress}.

\textbf{Inductive Step:} Proof by cases:

\textbf{Sequential Composition}: We can apply \cref{lem:graph_determinism_eager} to only focus on traces where we run the left half until stuck state and then the right half. First, we apply streaming progress to the left half, which tells us that we will output intermediate collections such that each output with a bounded stream type will have a fixed value. This satisfies the premise for induction on the right subgraph, so we can apply streaming progress again to know that each bounded output will be fixed. Using the same proof structure, we know that the intermediate collections will be maximal with respect to the unbounded inputs, so the final outputs will be maximal as well.

\textbf{Parallel Composition}: Because both sides are independent, we can simply use induction on each side. Because all bounded outputs will be fixed and all outputs are maximal with respect to the unbounded inputs, we satisfy streaming progress.
\end{proof}

\section{Nested Streams and Graphs}
\label{sec:nested}

So far, we have considered dataflow programs with a direct path of operators from each input to the outputs. But for many applications, it is necessary to perform stateful, iterative computations over an input stream. In Flo, we tackle this using constructs for \textbf{nested streams and graphs}.

Before we dive into formal semantics, let us lay out a high-level overview of our approach to nesting. First, we introduce nested streams, which are a specific type of stream that encapsulate several smaller streams. We define a set of restrictions for how operators must generate such nested streams, in particular how boundedness of the inner streams is enforced.

Once we have nested streams, we need an operator that can transform them. This is where the \texttt{nest} operator comes in, which makes it possible to transform a nested stream by defining a nested Flo graph that should be run on each inner stream. We introduce the \texttt{write\_defer} and \texttt{read\_defer} operators, which can be used to pass state across the iterations for each inner stream to enable iterative computation. We prove that these operators satisfy all the core operator properties, therefore preserving the high-level guarantees we have established for Flo.

\subsection{Nested Streams}
Our definition of Flo so far has dealt only with an abstract notion of collections and operators. But the \texttt{nest} operator is a concrete instance, and so we also need a concrete collection type for it to consume and produce. Furthermore, this collection type must store nested streams in a way that preserves boundedness properties and allows the inner graph to manipulate the inner streams.

To tackle this, we introduce the \textbf{ordered sequence of streams} in \cref{fig:ordered_sequence_of_streams}. This collection type, denoted $[(S_0, \ldots S_n)]$ is parameterized over several inner stream types $S_i = (C_i, B_i)$. Values of this type are stored as a list of tuples $[(c_{0, 0}, \ldots c_{0, n}), \ldots, (c_{m, 0}, \ldots c_{m, n})]$, where each $c_{i, j}$ is a value of type $C_j$. The terminator symbol $\terminator$ indicates the end of a stream.

\begin{figure}[h]
\begin{align}
[((C_1, &B_1), \ldots, (C_m, B_m))] \triangleq \{~[(c_{1,1},~\ldots), \ldots, (c_{n,1},~\ldots)]~| \nonumber\\
    &\forall{i,j}~c_{i,j} \in C_j \land (i > 1 \land B_j = \mathbf{B}) \implies \mathit{fixed}(c_{i,j}) \nonumber\\
\}~\cup~\{&~[\terminator, (c_{1,1},~\ldots), \ldots, (c_{n,1},~\ldots)]~| \nonumber\\
    &\forall{i,j}~ c_{i,j} \in C_j \land (B_j = \mathbf{B}) \implies \mathit{fixed}(c_{i,j}) \} \nonumber
\end{align}
$$[\terminator, \ldots] \concat x = [\terminator, \ldots]$$
$$[c_1, \ldots, c_n] \concat \terminator =[\terminator, c_1, \ldots, c_n]$$
$$[c_1, \ldots, c_n] \concat ((v_1, \ldots, v_m), \mathit{true}) =[(v_1, \ldots, v_n), c_1, \ldots, c_n]$$
$$[(v_1, \ldots, v_m), \ldots, c_n] \concat ((\delta_1, \ldots, \delta_m), \mathit{false}) = [(v_1 \concat \delta_1, \ldots, v_m \concat \delta_m), \ldots, c_n]$$

\caption{The collection type and concatenation operator for the ordered sequence of streams.}
\label{fig:ordered_sequence_of_streams}
\end{figure}

The concatenation operator on this collection type takes an ordered sequence of streams and \emph{either} the terminator $\terminator$, the tuple of the boolean true and a tuple of collections values matching the inner stream types, or a tuple of the boolean false and a tuple of concatenation values corresponding to the right-hand side accepted by $\concat$ for each inner stream type. If the boolean flag is true, the concatenation operator extends the collection with the tuple used as the new leftmost value. If it is false, the operator uses the concatenation operator of each of the inner stream types to extend the existing leftmost collections with the new values.

There is another key concern we need to address. Once a new tuple of collections is pushed into the ordered sequence, none of the other tuples will ever grow through concatenation. We need to ensure that these finalized tuples satisfy the restrictions of the inner stream types; in particular that they satisfy boundedness properties. To do this, we require that all tuples of collections after the leftmost one have fixed collections for each bounded stream type.

\subsection{Nesting Graphs}
The \texttt{nest} operator maps nested streams by transforming their inner streams one-by-one using an inner Flo graph. These inner graphs have special privileges: they can define \emph{iterative computations} by passing data across executions on subsequent inner streams. To do this, developers use pairs of \texttt{read\_defer} and \texttt{write\_defer} operators with matching keys. Any data sent to a \texttt{write\_defer} operator will be emitted by the corresponding \texttt{read\_defer} operator when processing the next inner stream (for the first step, \texttt{read\_defer} takes an initial value as a parameter).%

\diffblock{
Before we dive into the formal semantics of these operators, let us walk through a simple example to show how \texttt{nest}, \texttt{write\_defer}, and \texttt{read\_defer} can be combined to enable iterative computation. We will implement a classic iterative algorithm where we are given a set of directed edges and want to compute which nodes are reachable from a root within a fixed radius. Our algorithm starts with a single root node, and in a loop identifies the next ``layer'' of reachable nodes.}
\begin{DIFnomarkup}

\begin{figure}[h]
    \centering
    \includegraphics[page=3,trim={0 10cm 0 6cm},clip,width=0.8\textwidth]{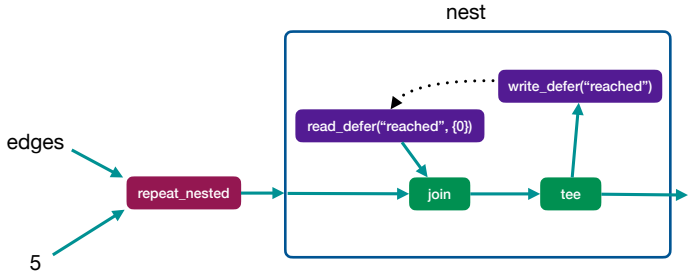}
    \caption{An example of identifying nodes within a fixed radius using nested graphs.}
    \label{fig:graph_reachability_example}
\end{figure}
\end{DIFnomarkup}

First, we need a collection type for sets of nodes and sets of edges (using standard semantics), along with some operators inspired by relational algebra. We omit the detailed semantics for brevity, but these are straightforward to define. The \texttt{join} operator takes in a set of nodes and a set of edges, and identifies the destination of all edges originating at a node in the input set. The \texttt{tee} operator consumes a single stream and emits a pair of streams, each duplicating the input.

Next, we must generate a stream-of-streams that drives the nested graph. For graph reachability within a fixed radius $n$, we need to run $n$ iterations of the inner graph. To achieve this, we introduce a \texttt{repeat\_nested} operator which consumes a stream and a natural number singleton $k$, and emits a stream with $k$ inner streams, each of which duplicates the contents of the input.

Putting these operators together, we show how to implement this algorithm in \cref{fig:graph_reachability_example}. On every iteration, we first collect the nodes reached up to the previous iteration using \texttt{read\_defer}, with an initial value of just the root node $0$. Then, we emit the next layer of reachable nodes and also send them to \texttt{write\_defer} to be used in the next iteration. In the output of this program, we will have a stream of sets of nodes, where each set contains the nodes reachable from the root with increasing radii up to the fixed limit.

In Flo, \texttt{nest} is a standard operator that satisfies all the proof obligations, so it can be... nested! This makes it possible to build arbitrarily complex nested cycles. For example, we can tweak the graph reachability example to allow recomputing the reachability analysis with extended radii. In this algorithm, we can use the output from a previous query to ``bootstrap'' the next query, and only run iterations to extend the radius rather than starting from scratch.

\begin{figure}[h]
    \centering
    \includegraphics[page=4,trim={0 3cm 0 5cm},clip,width=0.85\textwidth]{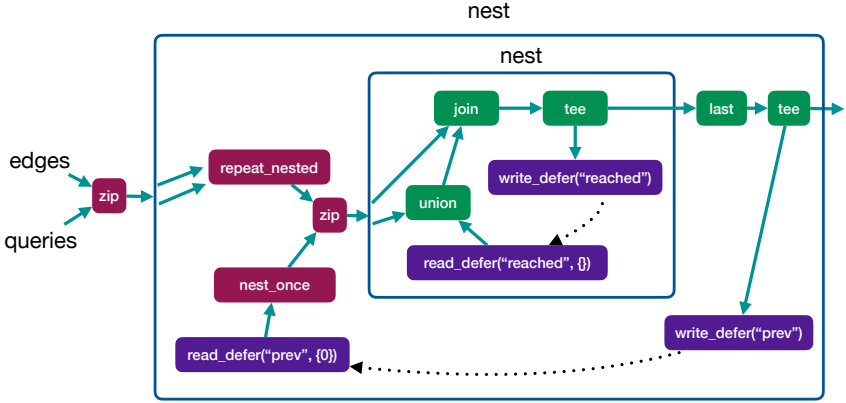}
    \caption{An example of graph reachability with a dynamic radius, using nested cycles.}
    \label{fig:graph_reachability_nested_example}
\end{figure}

In this example program, we assume that the input edges have already been shaped into an unbounded stream-of-streams where each inner stream contains the full set of edges\footnote{We could also consume the set of edges only once and ``persist'' them across iterations of the nested graph by sending a copy across a \texttt{defer} cycle. But that adds complexity to this example that distracts from nested cycles.}. The queries, which represent expansions of the radius, are also a stream-of-streams where each inner stream is a singleton containing the amount to expand the radius by. We use a new \texttt{zip} operator to feed multiple nested streams into \texttt{nest} by tupling their inner streams pairwise.

We use a new \texttt{last} operator to extract the final set emitted by reachability, which we defer to bootstrap the next query. To inject these nodes, we use a new operator \texttt{nest\_once} which generates an infinite stream-of-streams where the first inner stream contains the input and the rest are empty. Then, inside the reachability graph, we use \texttt{union} (which performs set union) to add the bootstrap nodes. Finally, we use \texttt{repeat\_nested} as before to drive iterations of graph reachability.

\subsection{Type Semantics}
Now, we are ready to lay out the formal semantics for nested graphs, beginning with the type semantics. First, we define the \texttt{defer} operators: \texttt{write\_defer} takes a key as a parameter and accumulates a bounded stream as input, and on the next iteration any matching \texttt{read\_defer} with the same key will emit the accumulated collection. Type-safety for these operators is a bit more complex, since we need to ensure that there is a single \texttt{write\_defer} for each key and that the stream types being written match the types being read.

To achieve this, we introduce a new pair of contexts $R$ and $W$ to our typing rules ($\vdash$ and $\vdash^O$) which each store a map from keys to stream types. We will use context $W$ substructurally, admitting only exchange (but not weakening or contraction) on this context. When typing a nested graph, these contexts are set to (arbitrary) identical values, which enforces that the same types are written and read. On the write-side, we also enforce that each key is written exactly once by splitting the $W$ keys at each composition until there is one key isolated to each \texttt{write\_defer}. For \texttt{read\_defer}, we have two variants because the optional second parameter stores a value to be emitted.

\begin{figure}[h]
    \begin{mathpar}\small
    {
        \inferrule[sequence]
        {R; W_1 ⊢ \es₁ : (I_1 ↪ O_1, \prec_1) \\ R; W_2 ⊢ \es₂ : (I_2 ↪ O_2, \prec_2) \\ O_1 \leq I_2}
        {R; W_1, W_2 ⊢ \es₁;\es₂ : (I_1 ↪ O_2, \prec_1)}
    }

    {   
        \inferrule[par]
        {R; W_1 ⊢ \es₁ : (I_1 ↪ O_1, \prec_1) \\ R; W_2 ⊢ \es₂ : (I_2 ↪ O_2, \prec_2)}
        {R; W_1, W_2 ⊢ \es₁ ∣ \es₂ : (I_1 \cdot I_2 ↪ O_1 \cdot O_2, \prec_1 \cdot \prec_2)}
    }

    {
        \inferrule[operator]
        {R;W \vdash^O op : (I ↪  O, \prec)
        \\
        I = ((S_0, B_0), \ldots (S_n, B_n))
        \\ \forall{i}.~\mathit{type}^C(s_i) = S_{i}}
        {R;W ⊢ \{(s_0, \ldots, s_n)\}[op] : (I ↪ O, (\prec))}
    }
    
    {
        \inferrule[read-defer-value-type]
        {\mathit{type}^C(v) = C \\ \mathit{fixed}(\llbracket{v}\rrbracket^C)}
        {R, k:C;\emptyset \vdash^O \text{read\_defer(k, v)}: (() \hookrightarrow (C, \mathbf{B}), \emptyset)}
    }

    {
        \inferrule[read-defer-no-value-type]
        {}
        {R, k: C; \emptyset \vdash^O \text{read\_defer(k)}: (() \hookrightarrow (C, \mathbf{B}), \emptyset)}
    }

    {
        \inferrule[write-defer-type]
        {}
        {R; k: C \vdash^O \text{write\_defer(k)}: ((C, \mathbf{B}) \hookrightarrow (), \emptyset)}
    }

    {
        \inferrule[nest-type]
        {D; D \vdash g: (I \hookrightarrow (O_1, \ldots), \prec_g) \\
        \forall{i}.~O_i = (C_i, \mathbf{B})}
        {R;\emptyset \vdash^O \text{nest}(g): (([I], X) \hookrightarrow ([(O_1, \ldots)], X), \prec_{\text{nest}}(\prec_g))}
    }
    
    {
        \inferrule[nest-with-copy-type]
        {D; D \vdash g: (I \hookrightarrow (O_1, \ldots O_m), \prec_g) \\
        D; D \vdash g_o: (I \hookrightarrow (O_1, \ldots O_m), \prec_g) \\
        O_i = (C_i, \mathbf{B})}
        {R;\emptyset \vdash^O \text{nest}(g, g_o): (([I], X) \hookrightarrow ([(O_1 \ldots O_m)], X), \prec_{\text{nest}}(\prec_g))}
    }
    \end{mathpar}
    
    \caption{Type semantics with defer contexts, and for \texttt{read\_defer}, \texttt{write\_defer}, and \texttt{nest}.}
    \label{fig:nested_type_semantics}
\end{figure}

The \texttt{nest} operator takes a graph $g$ of type $I \hookrightarrow O$ with partial order $\prec_g$. Each stream in $O$ must be \textbf{bounded} so that the inner graph finishes in finite time. The operator itself takes a stream of streams and emits a stream of streams, where the inner types are $I$ and $O$ respectively. The boundedness of the outer output (denoted $X$) is the same as the outer input. We also include a variant of \texttt{nest} with an additional parameter that stores the initial graph for the next iteration. We re-define our core composition type semantics with these contexts as well as for \texttt{write\_defer}, \texttt{read\_defer}, and \texttt{nest} in \cref{fig:nested_type_semantics}. Note that this requires a modification to the full type system; we do this in the usual way.  In particular, note that as existing operators never have graphs as subterms, they will be lifted into our context-enhanced system with arbitrary $R$ and empty $W$ contexts.

\subsection{Operational Semantics}

The \texttt{nest} operator processes tuples of inner streams one-by-one, maintaining the current inner streams at the rightmost element of the input. It shifts to the next tuple of inner streams once the graph reaches a stuck state and all the outputs (including those to \texttt{write\_defer}) are fixed. The \texttt{nest} operator first stores a copy of the initial graph as a second parameter (this variant is lower in the partial order for \texttt{nest}). To process an inner stream, we use $\mathit{setinput}$ to set the inner graph inputs, step the inner graph, and then use $\mathit{inputs}$ to propagate input consumption to the nested stream. Once the input only contains a terminator, the operator emits a terminator as well.

Note that \texttt{write\_defer} has no small-step rules; its behavior is handled by the semantics for \texttt{nest}. The \texttt{read\_defer} operator takes a single small-step, which emits its collection parameter. This collection parameter is either a default value (for the first tuple of inner streams) or a value from \texttt{write\_defer}. When shifting to the next inner stream input, we use the $\mathit{collect\_defer}$ helper to accumulate the inputs to each \texttt{write\_defer} into a map, and then use the $\mathit{set\_defer}$ helper to create a copy of the initial graph with the corresponding \texttt{read\_defer} operators updated to use those collections. We visualize this behavior in \cref{fig:nest_operators_viz} where a stream-of-streams on the left, with later elements lower, is transformed into another stream-of-streams. We then lay out the formal operational semantics in \cref{fig:nest_operational}.

\begin{figure}[h]
    \centering
    \includegraphics[page=2,trim={10cm 4.6cm 10cm 1.3cm},clip,width=0.6\textwidth]{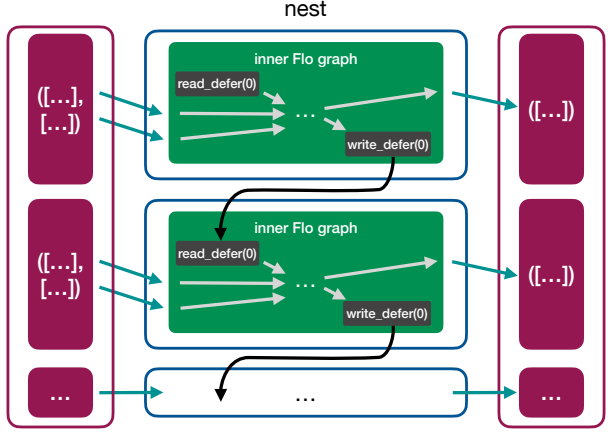}
    \caption{Visualization of the \texttt{nest}, \texttt{read\_defer}, and \texttt{write\_defer} operators, where the nested streams on the left and right have later elements lower.}
    \label{fig:nest_operators_viz}
\end{figure}

\begin{figure}[H]
    \newcommand{\deferin}{\ensuremath\textit{collect\_defer}}
    \newcommand{\setdeferout}{\ensuremath\textit{set\_defer}}
    
    \small
    \begin{align*}
        \deferin(\es₁;\es₂) ≜&~\deferin(\es₁) \cup \deferin(\es₂)\\
        \deferin(\es₁∣\es₂) ≜&~\deferin(\es₁) \cup \deferin(\es₂)\\
        \deferin(\{I\}[\mathit{write\_defer(k)}]) ≜&~\{ k: I \}\\
        \deferin(\{I\}[op]) ≜&~\terminator&\textit{when}~op \neq \mathit{write\_defer}
    \end{align*}
    \vspace{-0.3cm}
    \begin{align*}
        \setdeferout(\es₁;\es₂,M) ≜&~\setdeferout(\es₁,M);\setdeferout(\es₂,M)&\\
        \setdeferout(\es₁∣\es₂,M) ≜&~\setdeferout(\es₁,M)∣\setdeferout(\es₂,M)&\\
        \setdeferout(\{\}[\mathit{read\_defer}(k, v)],M) ≜&~\{\}[\mathit{read\_defer}(k, M[k])]\\
        \setdeferout(\{I\}[op],M) ≜&~\{I\}[op]&\textit{when}~op \neq \mathit{read\_defer}
    \end{align*}
    
    \begin{mathpar}\small
    {
        \inferrule[nest-first]
        {I \neq \terminator}
        {([\ldots, I], \text{nest}(g)) \rightarrow^\delta ([\ldots, I], \text{nest}(g, g), ((\bot, \ldots, \bot), true))}
    }
    
    {
        \inferrule[nest-first-fixed]
        {}
        {([\terminator], \text{nest}(g)) \rightarrow^\delta ([\terminator], \text{nest}(g, g), \terminator)}
    }
    
    {
        \inferrule[nest-run-graph]
        {(\mathit{setinput}(g, \lfloor{I}\rfloor^C)) \rightarrow^\Delta (g', (O_1', \ldots, O_m'))}
        {([\ldots, I], \text{nest}(g, g_o)) \rightarrow^\delta ([\ldots, \llbracket\mathit{inputs(g')}\rrbracket^C], \text{nest}(g', g_o), ((O_1', \ldots, O_m'), false))}
    }
    
    {
        \inferrule[nest-run-step]
        {(\mathit{setinput}(g, \lfloor{I}\rfloor^C), (O_1, \ldots, O_m)) \text{ is stuck} \\
        \forall{m}. \mathit{fixed}(O_m) \\
        \forall_{d \in \deferin(g)}~\mathit{fixed}(d) \\
        I_{next} \neq \terminator}
        {([\ldots, I_{next}, I], \text{nest}(g, g_o)) \rightarrow^\delta ([\ldots, I_{next}], \text{nest}(\setdeferout(g_o, \deferin(g)), g_o), ((\bot, \ldots, \bot), true))}
    }
    
    {
        \inferrule[nest-run-fixed]
        {(\mathit{setinput}(g, \lfloor{I}\rfloor^C), (O_1, \ldots, O_m)) \text{ is stuck} \\
        \forall{m}. \mathit{fixed}(O_m)}
        {([\terminator, I], \text{nest}(g, g_o)) \rightarrow^\delta ([\terminator], \text{nest}(g_o, g_o), \terminator)}
    }
    
    {
        \inferrule[read-defer-emit]
        {}
        {((), \texttt{read\_defer}(k, v)) \rightarrow^\delta ((), \texttt{read\_defer}(k), v)}
    }
    \end{mathpar}
    
    \caption{Small-step semantics for the \texttt{nest} and \texttt{read\_defer} operators.}
    \label{fig:nest_operational}
\end{figure}

\subsection{Operator Properties}
Because \texttt{nest} is a standard operator, it must satisfy all Flo's core operator properties. First, we define the partial order $\prec_{\texttt{nest}}(\prec_g)$, which is parameterized over the partial order for the inner graph. Our small step semantics either consume the rightmost input inner stream or reduce it according to the nested graph's partial order. So we have $$[...] \prec_{\texttt{nest}}(\prec_g) [..., I]$$
$$[..., I'] \prec_{\texttt{nest}}(\prec_g) [..., I] \text{ if } I' \prec_g I$$
$$\terminator \prec_{\texttt{nest}}(\prec_g) [...]$$

For \texttt{read\_defer}, any operator expression \emph{without} the value parameter is smaller than any with it, so the step for \texttt{read\_defer} reduces the operator expression. Since \texttt{write\_defer} takes no steps, it satisfies our operator proof obligations trivially. We can now prove the properties of \texttt{nest}.

\textbf{Operator Well-Formedness:}
\begin{proof}
When we step across an input ({\sc nest-run-step} and {\sc nest-run-fixed}), the input is updated to a prefix, which satisfies our first case of the partial order. The only other rule that modifies inputs is {\sc nest-run-graph}, which will only touch the inputs if it recursively steps a left-most operator that consumes those inputs. Because of \cref{lem:operator_stuck}, we know that running any of these operators will reduce the input along the partial order for the inner graph.
\end{proof}

\textbf{Operator Preservation:}
\begin{proof}
There are only two ways we modify the inputs and outputs; either we push or pop an entire tuple of inner streams or update the rightmost input or leftmost output. In the first case, we only push $\bot$, and popping does not affect the type of the collection. When we update an input/output instead, \cref{lem:graph_preservation} guarantees that this is safe. In all our rules, the operator is only changed by setting the inputs of the graph, which is safe because the input types are unchanged.
\end{proof}

\textbf{Operator Determinism:}
\begin{proof}
First, {\sc nest-first} or {\sc nest-first-fixed} will execute, then {\sc nest-run-graph} will run until stuck state, then {\sc nest-run-step} will run, until the input stream is fixed and {\sc nest-run-fixed} is run. In {\sc nest-run-graph}, the only rule where we recursively apply a step, we know that the stuck state exists (\cref{lem:graph_stuck}) and is deterministic (\cref{lem:graph_determinism_eager}). Therefore, \texttt{nest} is deterministic.
\end{proof}

\textbf{Eager Execution:}
\begin{proof}
For {\sc nest-first-fixed} and {\sc nest-run-fixed}, because the input collection is already fixed deltas have no effect. For {\sc nest-first}, regardless of whether the delta is introduced before or after, the final state will be the same because we copy the input as-is and a concatenation will never affect $I \neq \terminator$ because an element can never be replaced by the terminator. Because {\sc nest-run-graph} will run until the inner graph reaches a stuck state, we can apply \cref{lem:graph_determinism_eager} to know that introducing a delta before or after the step will result in the same final state, because introducing a delta to the nested stream will only affect the last element $I$. For {\sc nest-run-step}, the delta will never affect $I$, and any delta to $I_{next}$ will be applied the same before or after the step.
\end{proof}

\textbf{Streaming Progress:}
\begin{proof}
If the input is bounded, the output will become fixed because each iteration will finish in finite time by \cref{lem:graph_streaming_progress}. If it is unbounded, the input sequence being fixed only affects {\sc nest-first-fixed} and {\sc nest-run-fixed} rules, which simply concatenate a terminator to the output sequence without modifying it in any other way.
\end{proof}

\section{Case Studies}
\label{sec:case_studies}

Flo aims to provide strong guarantees that are meaningful across a range of applications while remaining sufficiently abstract to capture a variety of semantics. In this section, we demonstrate the expressiveness of Flo by using it to implement the key ideas found in existing streaming languages. Note that our goal is \textbf{not} to show how to implement these entire languages in Flo, rather that key ideas from them can be expressed and satisfy our properties. We focus on three existing languages:
\begin{enumerate}
\item Flink~\cite{flink}, a popular streaming framework that features windowed aggregation functions.
\item LVars~\cite{lvars}, a language for parallel programming that uses lattices to ensure determinism.
\item DBSP~\cite{dbsp}, a system for incremental view maintenance that uses z-sets to model relations.
\end{enumerate}

\subsection{Flink}
Flink~\cite{flink} is a classic example of a streaming dataflow language. Like Flo, Flink uses compositions of operators to describe computations over streams. A key technique from Flink is the use of \emph{windows} to enable aggregations over fixed-time intervals of an infinite stream. We will show that this idea from Flink can be modeled in Flo as a timestamped collection type, where windowing operators generate streams-of-streams which can then be aggregated in a nested graph.

Flink uses ordered sequences as its primary semantics for streams. We can model this in Flo by introducing an ordered sequence collection, which simply stores a list of values where the newest items are on the left and the oldest elements on the right. We define this collection in \cref{fig:ordered_sequence_type_semantics}.

\begin{figure}[h]
$$\texttt{S<V>} \triangleq \{ [v_1, \ldots, v_n]~|~\forall{i}.~v_i \in V \} \cup \{ [\terminator, v_1 \ldots v_n]~|~\forall{i}.~v_i \in V \}$$
$$[v_1, \ldots, v_n] \concat [d_1, \ldots, d_m] = [d_1, \ldots, d_m, v_1, \ldots, v_n]$$
$$[\terminator, \ldots] \concat x = [\terminator, \ldots]$$
$$[v_1, \ldots, v_n] \concat \terminator = [\terminator, v_1, \ldots, v_n]$$

\caption{Collection type and concatenation operator for ordered sequences in Flo.}
\label{fig:ordered_sequence_type_semantics}
\end{figure}

With a collection type for ordered sequences, we can define classic operators found in Flink such as \texttt{map}. We can also define semantics for \texttt{fold} that matches the Flink semantics of emitting a stream containing a single value, which is the result of the aggregation. We can define the type and operational semantics for these operators in \cref{fig:flink_operational_semantics} (we omit partial orders for brevity).

\begin{figure}[h]
\begin{mathpar}\small
{
    \inferrule[map-type]
    {\vdash f: T \rightarrow U}
    {\vdash^O \text{map}(f): ((\texttt{S<T>}, X) \hookrightarrow (\texttt{S<U>}, X), \prec_{\text{map}})}
}

{
    \inferrule[map]
    {f(h) \Downarrow u}
    {([\ldots, h], \text{map}(f)) \rightarrow^\delta ([\ldots], \text{map}(f), [u])}
}

{
    \inferrule[map-terminator]
    {}
    {([\terminator], \text{map}(f)) \rightarrow^\delta (\terminator, \text{map}(f), \terminator)}
}

{
    \inferrule[fold-type]
    {\vdash acc: U\\\vdash f: (U, T) \rightarrow U}
    {\vdash^O \text{fold}(acc, f): ((\texttt{S<T>}, B) \hookrightarrow (\texttt{S<U>}, B), \prec_{\text{fold}})}
}

{
    \inferrule[fold]
    {f(acc, h) \Downarrow acc'}
    {([\ldots, h], \text{fold}(acc, f)) \rightarrow^\delta ([\ldots], \text{fold}(acc', f), [])}
}

{
    \inferrule[fold-terminator]
    {}
    {([\terminator], \text{fold}(acc, f)) \rightarrow^\delta (\terminator, \text{fold}(acc, f), [\terminator, acc])}
}
\end{mathpar}

\caption{Operational semantics for Flink operators in Flo.}
\label{fig:flink_operational_semantics}
\end{figure}

Map processes elements one by one and passes through the terminator, so it satisfies eager execution and streaming progress easily. But note that for the fold operator to satisfy streaming progress, its input must be bounded, otherwise the step that emits the aggregated value when the input becomes fixed would be illegal.

Now given an unbounded stream, how do we use fold? Flink's answer is to use windows, where the aggregation is run over blocks of data defined by timestamp intervals. This idea maps perfectly to the Flo model, where we can convert an unbounded stream of timestamps-value pairs into a stream-of-streams (as in \cref{sec:nested}) and then use a nested graph to aggregate over each window.

To implement this windowing operator, we will use the internal state of the operator to store the values corresponding to the next window. When a timestamp farther than the end of the current interval is received, we emit the accumulated window. Because the operator uses timestamp boundaries to determine when to emit inner streams, the inner streams are bounded even though the outer stream-of-streams is unbounded. We omit detailed proofs for brevity, but this operator also satisfies eager execution and streaming progress. We can sketch the type and operational semantics for this operator in \cref{fig:window_operational_semantics} (again omitting the partial order for brevity).

\begin{figure}[h]
\begin{mathpar}\small
{
    \inferrule[window-type]
    {interval \text{ is an amount of time}\\T \text{ is a timestamp}\\\forall{i}~t_i\text{ is a timestamp}\\\forall{i}~\vdash v_i: D}
    {\vdash^O \text{window}(interval, [(v_1, t_1), \ldots (v_n, t_n)]): ((\texttt{S<(D, T)>}, X) \hookrightarrow ([(\texttt{S<D>}, B)], X), \prec_{\text{window}})}
}

{
    \inferrule[window-first]
    {}
    {([\ldots (v_n, t_n)], \text{window}(interval, [])) \rightarrow^\delta ([\ldots], \text{window}(interval, [(v_n, t_n)]), [])}
}

{
    \inferrule[window]
    {t_n - wt_m \leq \text{interval}}
    {([\ldots (v_n, t_n)], \text{window}(interval, [(w_1, wt_1), \ldots, (w_m, wt_m)])) \rightarrow^\delta\\ ([\ldots], \text{window}(interval, [(v_n, t_n), (w_1, wt_1), \ldots, (w_m, wt_m)]), [])}
}

{
    \inferrule[window-emit]
    {t_n - wt_m > \text{interval}}
    {([\ldots (v_n, t_n)], \text{window}(interval, [(w_1, wt_1), \ldots, (w_m, wt_m)])) \rightarrow^\delta\\ ([\ldots], \text{window}(interval, [(v_n, t_n)]), ([w_1, \ldots, w_m], true))}
}
\end{mathpar}

\caption{Type and operational semantics for the \texttt{window} operator.}

\label{fig:window_operational_semantics}
\end{figure}

To complete our example of how patterns from Flink can be modeled in Flo, we can perform aggregations over these windows by using a nested graph. We can pass the result of the window operator into the \texttt{nest} operator defined in \cref{sec:nested}, and use the \texttt{fold} operator inside the nested graph. Because the nested stream is bounded, this will typecheck and the aggregation will be appropriately computed for each window.

\subsection{LVars}
LVars~\cite{lvars} is a language for deterministic parallel programming that uses lattice-based data structures to ensure determinism. A key insight of LVars is to leverage monotonicity to ensure determinism, by requiring that pieces of state are always updated monotonically, and restricting reads of the state to threshold queries that check if the state is larger than a given value. We will show that the essence of LVars can be modeled in Flo as a special collection type, where threshold queries can be used to safely read from lattice values that are derived from unbounded aggregations.

First, let us define the collection for an LVar. Consider a lattice defined by a set of values $L$, a bottom value $\bot$, and the lattice join operator $\sqcup$. We will define the LVar collection type a tuple of the lattice value and a boolean flag, where the boolean flag indicates whether the value is $\mathit{fixed}$ or not. We will use the lattice join for concatenation, and the $\terminator$ terminator to terminate a collection.

\begin{figure}[h]
$$\texttt{LVar<L>} \triangleq \{ (v, \mathit{true})~|~v \in L \} \cup \{ (v, \mathit{false})~|~v \in L \}$$
$$(v_1, \mathit{false}) \concat v_2 = (v_1 \sqcup v_2, \mathit{false})$$
$$(v_1, \mathit{true}) \concat v_2 = (v_1, \mathit{true})$$
$$(v_1, \mathit{false}) \concat \terminator = (v_1, \mathit{true})$$

\caption{Type semantics and concatenation operator for LVars in Flo.}
\label{fig:lvar_type_semantics}
\end{figure}

There are many operators that can produce an LVar from various input collection types. Let us use ordered sequences as an example. We can define a \texttt{fold\_lattice} operator which transforms each value into a lattice and then applies the lattice join across the sequence. We define the type and operational semantics for this operator in \cref{fig:fold_lattice_semantics}.

\begin{figure}[h]
\begin{mathpar}\small
{
    \inferrule[fold-lattice-type]
    {\vdash f: T \rightarrow U \\
     U\text{ is a lattice}
    }
    {\vdash \text{fold\_lattice}(f): (\texttt{S<T>}, X) \hookrightarrow (\texttt{LVar<U>}, X)}
}

{
    \inferrule[fold-lattice]
    {v \neq \terminator \\ f(v) \Downarrow l}
    {([..., v], \text{fold\_lattice}(f)) \rightarrow^\delta ([...], \text{fold\_lattice}(f), l)}
}

{
    \inferrule[fold-lattice-terminated]
    {}
    {([\terminator], \text{fold\_lattice}(f)) \rightarrow^\delta (\terminator, \text{fold\_lattice}(f), \terminator)}
}
\end{mathpar}

\caption{Type and operational semantics for the \texttt{fold\_lattice} operator.}
\label{fig:fold_lattice_semantics}
\end{figure}

We omit detailed proofs of the core operator properties for brevity here, but note that the boundedness of the output is equal to the boundedness of the input. This is because we can guarantee a terminator on the output when the input will become fixed. In addition, we satisfy eager execution because we always consume elements from the rightmost side, and concatenation to the input can only introduce new elements on the left.

Consider a naive attempt to implement an operator that converts an \texttt{LVar<T>} back into an ordered sequence \texttt{[T]} by generating a stream containing a single value with that LVar: $$((v, \_), \text{to\_sequence}) \rightarrow^\delta (\terminator, \text{to\_sequence}, [v])$$

This operator will be \textbf{illegal} because it does not satisfy eager execution. Recall that we are interested in convergence regardless of whether a delta is introduced before or after the step. If we introduce a delta that changes the lattice value, the output sequence would be different depending on this scheduling decision, making the operator non-deterministic. Let's make another attempt to implement this operator, where we wait for the LVar to be fixed first: $$((v, true), \text{to\_sequence}) \rightarrow^\delta (\terminator, \text{to\_sequence}, [v])$$

This operator satisfies eager execution, but \textbf{now fails} to satisfy streaming progress when instantiated with an unbounded streaming input! If we run the operator on an unfixed input, the output will be an empty sequence. But if we terminate this input, the output will grow to include the lattice value, which is illegal because streaming progress mandates that the only change between these executions should be that the output also becomes fixed, without any changes to its contents. A fix is to restrict the typing rules for the operator to only accept bounded inputs, so that the input is guaranteed to be eventually fixed.

What can we do with \emph{unbounded} LVars? The fundamental properties of Flo and the original LVars paper come to the same conclusion: we must use a threshold query instead. We can define an operator that takes an LVar and a threshold value, and emits the threshold if the input exceeds it. We list the type and operational semantics for this operator in \cref{fig:threshold_semantics} (omitting partial orders).

\begin{figure}[h]
\begin{mathpar}\small
{
    \inferrule[lvar-threshold-type]
    {\forall{i}.~t_i \in U \land \forall{i,j}~\nexists{k}.~i \neq j \land t_i \sqcup t_j = k}
    {\vdash_O \text{thresh}(t_1,~\ldots): ((\texttt{LVar<U>}, X) \hookrightarrow (\texttt{S<U>}, X), \prec_{\text{thresh}})}
}

{
    \inferrule[lvar-threshold]
    {v \sqcup t_i = v}
    {((v, \_), \text{thresh}(t_1,~\ldots)) \rightarrow^\delta (\terminator, \text{thresh}(t_1,~\ldots), t_i)}
}

{
    \inferrule[lvar-threshold-terminated]
    {}
    {((v, true), \text{thresh}(\ldots)) \rightarrow^\delta (\terminator, \text{thresh}(\ldots), \terminator)}
}
\end{mathpar}

\caption{Type and operational semantics for the \texttt{threshold} operator.}
\label{fig:threshold_semantics}
\end{figure}

This operator satisfies \textbf{both} eager execution and streaming progress, making it safe to use in a Flo program. The more general properties required for Flo, which do not involve partial orders over collection values or any algebraic properties, still map very precisely to the approach taken in LVars to enable deterministic data processing.

\subsection{DBSP}
\label{sec:dbsp}
Another point in the streaming language design space comes from the database community. DBSP~\cite{dbsp} introduces a formal model for relational operators that can be incrementally executed on live updating databases. A key insight of DBSP is that relations with incremental updates can be modeled as z-sets, where each element in the set has an integer cardinality, such that negative values correspond to retractions of data. We will show that the essence of DBSP can be modeled in Flo by using a special collection type for z-sets, where incremental operations over these correspond to satisfying eager execution.

First, let us define the collection for a z-set in \cref{fig:zset_type_semantics}. We will define the z-set collection type as a map of keys to integer cardinalities as well as a boolean flag that indicates that the collection is fixed. The concatenation operator simply combines the two maps by adding the cardinalities of matching keys, and the $\terminator$ terminator makes the collection fixed.

\begin{figure}[h]
Cardinality Maps: $M = \{ k_1: v_1, \ldots \} \text{ where } v_i \in \mathbb{Z}$, $M[k] = 0 \text{ if } k \notin M$
$$(M_1 + M_2)[k] = M_1[k] + M_2[k]$$

ZSet = $\{(m, \mathit{true})~|~m \in M\} \cup \{(m, \mathit{false})~|~m \in M\}$
\begin{align*}
    (M_1, false) \concat M_2 &= \{ (M_1 + M_2, false) \} \\
    (M, \_) \concat \terminator &= \{ (M, true) \}
\end{align*}

\caption{Collection type and concatenation operator for z-sets in Flo.}
\label{fig:zset_type_semantics}
\end{figure}

In DBSP, inputs to the program are z-sets, and we will take the same approach when mapping this to Flo. Next, we define operators over z-sets. Let us define \texttt{map}, a general version of projection, in \cref{fig:zset_basic_operational}. We omit typing rules for brevity, but the output boundedness is the same as the input.

\begin{figure}[h]
\begin{mathpar}\small
{
    \inferrule[map-zset]
    {f(k_1, v_1) \Downarrow v'}
    {((\{ k_1: v_1, \ldots\}, \_), \text{map}(f)) \rightarrow^\delta\\\\((\{\ldots\}, \_), \text{map}(f), (\{ k_1: v' \}, \_))}
}

{
    \inferrule[map-zset-terminated]
    {}
    {((\{\}, true), \text{map}(f)) \rightarrow^\delta (\terminator, \text{map}(f), \terminator)}
}
\end{mathpar}

\caption{Small-step semantics for the map operator.}
\label{fig:zset_basic_operational}
\end{figure}

This operator trivially satisfies streaming progress, because no outputs are gated on termination. In DBSP, the primary goal is incremental execution: we can introduce additional input and the output will be updated to the result on the full input. This is \emph{exactly} the definition of \textbf{eager execution}. Our operators satisfy this property because they are distributive over the z-set. Consider processing a key $k_1$ with cardinality $v_1$ only to have it re-introduced by a delta with cardinality $v_2$. If the delta is applied before the operator, the operator will directly emit a value with cardinality $v_1 + v_2$. If the delta is applied after, cardinality $v_1$ will be emitted, and later the operator will emit $v_2$ which will be added together by concatenation.

\begin{figure}[h]
$$(M_1 \bowtie M_2)[k] = M_1[k] \cdot M_2[k]$$
\begin{mathpar}\small
{
    \inferrule[join-zset]
    {}
    {((M_1', s_1), (M_2', s_2), \bowtie(M_1, M_2)) \rightarrow^\delta\\((\{\}, s_1), (\{\}, s_2), \bowtie(M_1 + M_1', M_2 + M_2'), (M_1 \bowtie M_2' + M_1' \bowtie M_2 + M_1' \bowtie M_2'))}
}

{
    \inferrule[join-zset-terminated]
    {}
    {((\{\}, true), (\{\}, true), \bowtie(\_, \_)) \rightarrow^\delta (\terminator, \terminator, \bowtie, \terminator)}
}
\end{mathpar}

\caption{Operational semantics for the join operator.}
\label{fig:zset_join_operational}
\end{figure}

A more interesting operator is the natural join ($\bowtie$), which takes two z-sets and produces a new z-set by joining on a key. First, we define a $\bowtie$ operator on z-sets which joins them by taking the product of cardinalities of matching keys. To perform an incremental join, we store the z-sets which have already been processed in the state of the operator. We can then apply the z-set property $(a + a') \bowtie (b + b') = a \bowtie b + a' \bowtie b + a \bowtie b' + a' \bowtie b'$. We use this in a sketch for the operational semantics in \cref{fig:zset_join_operational} (again, omitting type semantics but using only unbounded streams).

Again, what is interesting here is that proving eager execution \emph{aligns exactly} with the incremental computation goal in DBSP. In DBSP, proofs of correctness hinge on the join operator being bilinear, because $a \cdot (b + c) = a \cdot b + a \cdot c$. This is exactly the property we need to prove eager execution, because the operator must be distributive over concatenations to the z-set. This is a powerful demonstration of the flexibility of Flo, as it can precisely capture the semantics of incremental computation with retractions, a key limitation of approaches like Stream Types~\cite{streamtypes}.

\subsection{Putting it Together}
What is particularly exciting is that all these case studies fit into the common model of Flo. In fact, we could unify all three into a single language, since the operators are all composable and can be used together. For example, we shared the ordered sequence collection between Flink and LVars, so the operators we defined in both could easily be mixed together to compute a threshold over windowed aggregates. This shows the power of the abstract approach taken by Flo; we can capture a wide range of semantics under one roof, while still providing strong guarantees about the behavior of the system as a whole.

\section{Related Work}
Flo builds on the vast bodies of work on streaming language design from both the programming languages and databases communities. We leverage the insights across both traditional stream processing and incremental computation to devise a new model for progressive streams.

\subsection{Stream Types and Deterministic Dataflow}
The most closely related work to Flo recently is Stream Types~\cite{streamtypes}, which provides a rich type system that can precisely capture the structure of elements in a stream. Stream Types focus on capturing ordering invariants, such as the presence of certain elements within bracketing pairs. These fine-grained types make it possible to prove strong semantic guarantees about the \emph{implementation of operators}, such as determinism when operating on prefixes of data.

These properties map well to the eager execution and streaming progress properties of Flo, which takes a more abstract compositional approach to stream semantics. In this way, Stream Types and Flo can be complementary, since Stream Types can be used to prove that operators in a Flo language satisfy the properties required by Flo.  Flo's notion of streams, however, is more general than that of Stream Types; indeed, one of the key limitations of Stream Types is that they cannot model incremental computation with retractions, a key feature of DBSP that Flo can capture.

Other work defines streams as monoids~\cite{mamouras2020semantic, data_trace_types} and uses monotone operators to ensure determinism. We generalize this approach by relaxing their monotonicity requirements into eager execution, and by relying on a notion of concatenation that generalizes their monoidal structure. This enables Flo to be used to model retractions that the monoidal approach cannot capture.  

\subsection{Stream Query Languages}
``Continuous'' query languages over streams have been a topic of recurring interest in database research since the 1990s. A recent tutorial article overviews the history of that work~\cite{carbone2020beyond}, and highlights the foundational influence of CQL~\cite{arasu2006cql} on language semantics. CQL extends SQL with operators that map a family of timestamped stream collections (unbounded, in our terminology) to relations (bounded) and vice-versa; SQL is used as an inner language to map relations to relations. CQL assumes a totally ordered, timestepped model of execution in which all data for each timestep is known to be available when that timestep is processed. Like many stream query languages of its time, CQL does not address delay directly: ``Our semantics does not dictate `liveness' of continuous query output---that issue is relegated to latency management in the query processor~\cite{babcock2003chain,carney2003operator}''.

The same tutorial also points out various constructs that stream query languages introduced for \emph{tracking} progress, including punctuations~\cite{tucker2003exploiting}, watermarks~\cite{akidau2015dataflow}, heartbeats~\cite{srivastava2004flexible}, slack~\cite{abadi2003aurora}, and frontiers~\cite{naiad}. While some of these are operational (e.g., timeout-based), many fit our framework in two places: families of collection types that admit reasoning about fixedness (e.g., mixing data and control messages), and language constructs for extracting bounded ``inner'' collections.

An additional recurring discussion in these systems relates to the practical issue of ``late-arriving information'' or ``out of order processing,'' in which input values arrive that require a system to ``compensate'' for or ``retract'' previously-emitted output values. As illustrated in Section~\ref{sec:dbsp}, recent approaches~\cite{dbsp,naiad} show how these concerns can be made orthogonal to our discussion here by lifting compensations and their handling into richer collection types and operator algebras.

\subsection{Streaming Dataflow Systems}
There has been much work on building \emph{performant} streaming dataflow systems, particularly for use in analytical workloads. Systems like Samza~\cite{samza}, Storm~\cite{storm}, Flink~\cite{flink}, Heron~\cite{heron}, Beam~\cite{beam}, and Spark Streaming~\cite{discretized_streams} all provide complete systems for stream dataflow. These systems are highly performant, and as a result, they focus on the operational aspects of streaming systems, such as fault tolerance, scalability, and low-latency processing. As such, many of the contributions of these systems center on managing persistence of data on distributed nodes and preservation of deterministic outputs in the face of failures, an operational concern that we abstract away in Flo.

More recent work has focused on batching as a way to improve performance~\cite{arc,weld}, which can be modeled in Flo using nested streams. All these approaches, however, generally focus on ordered sequences as a global stream type, rather than allowing programs to mix and match collection types as in Flo. Although Flo is a theoretical foundation, we believe there is much work to be done in building a practical streaming system that can leverage the guarantees provided by Flo.

\subsection{Reactive, Incremental, and Stream-based Programming}

Much work exists on functional reactive programming (FRP), a paradigm
in which programs are continuously re-run (often incrementally) on
ever-changing inputs \cite{LTL-types-FRP, paykin2016essence, fair-rp,
fr-types, frp-without-leaks}.  These programs can be formalized as
streams, and are often compiled to a streaming dataflow representation
similar to those we explore in this paper.  Of particular interest are
papers which reason about avoiding space-time leaks \cite{fr-types,
frp-without-leaks}, requiring a property similar to our streaming
progress condition.

Other work in this space has focused on the
correspondence between LTL and FRP \cite{LTL-types-FRP,
paykin2016essence, fair-rp}, or have focused on the incrementalization
of functional programs \cite{inc, adapton}.  While our work also
reasons about properties like equivalence under re-ordering, eventual
termination, and avoiding space-time leaks, we choose a new, more
general formalism both better-suited to our domain and less
opinionated about the definitions of ``streams'' and ``operators.''

Many stream-based languages have precise ideas of how to
define both streams and computations
\cite{datafun, crdt-stream-processing, streamit, berry1985esterel, lustre}.
While much of this work is interested in properties similar to eager
execution and streaming progress, all of it is formalized with a
syntax and semantics for a particular language.  In contrast,
Flo offers an abstract, general framework for streaming languages,
with only enough constraints to prove our core properties.  We believe
that Flo provides a basis to build such languages.

An incremental, streaming language of particular interest is Naiad~\cite{naiad}, which uses a dataflow model that supports incremental execution of dataflow with cycles. Our model of nested streams is inspired by Naiad, which similarly uses special operators to describe how streams are fed into out of nested loops. In Flo, our collection type for ordered sequences of streams requires inner streams to be processed in-order, while Naiad allows for ``time-travelling'' with vector timestamps to allow modifications to already-processed streams. One could imagine implementing this in Flo using a specialized collection type and nesting operator for timestamped messages.

Other work in the streaming space focuses on a similar goal of unifying several streaming semantics under one language~\cite{soule2010universal}. But this work makes limited guarantees about the behavior of the program, with respect to both correctness and liveness of outputs. Flo provides a similar general model, but supports compositional proofs of determinism and completeness of outputs.

\section{Conclusion}
In this paper, we introduced Flo, a parameterized streaming dataflow language that provides strong guarantees about the behavior of streaming computations. Flo identifies two key properties which are general yet necessary for streaming programs: \textbf{streaming progress} and \textbf{eager execution}. We formally model these properties and show that they are preserved across composition. Furthermore, we showed that Flo supports nested streams and graphs while maintaining the semantic guarantees of the language. To demonstrate the capabilities of Flo, we showed that Flo can capture a wide range of streaming semantics, from windowed aggregation in Flink, to monotone thresholds in LVars, and even incremental computation in DBSP. We believe that Flo provides a powerful foundation for building streaming systems that can be used to more strongly reason about their guarantees.

\begin{acks}
We thank our anonymous reviewers for their insightful feedback on this paper. This work is supported in part by National Science Foundation CISE Expeditions Award CCF-1730628, IIS-1955488, IIS-2027575, DOE award DE-SC0016260, ARO award W911NF2110339, and ONR award N00014-21-1-2724, and by gifts from Amazon Web Services, Ant Group, Ericsson, Futurewei, Google, Intel, Meta, Microsoft, Scotiabank, and VMware. Shadaj Laddad is supported in part by the NSF Graduate Research Fellowship Program under Grant No. DGE 2146752. Any opinions, findings, and conclusions or recommendations expressed in this material are those of the authors and do not necessarily reflect the views of the National Science Foundation.
\end{acks}
\bibliographystyle{ACM-Reference-Format}
\bibliography{bibliography}

\end{document}